\documentclass[10pt, twocolumn]{IEEEtran}
\usepackage{epsfig, psfrag, amsmath, amssymb, amsfonts, latexsym, eucal, balance, hyperref, indentfirst, graphicx, bookmark}
\usepackage[tight,footnotesize]{subfigure}
\newtheorem{thm}{Theorem}

\newtheorem{lem}{Lemma}

\newtheorem{defn}{Definition}
\newtheorem{rem}{Remark}


\begin{document}

\title{Competition Between Wireless Service Providers: Pricing, Equilibrium and Efficiency}
\author{Feng Zhang and Wenyi Zhang\\
Department of Electronic Engineering and Information Science\\ University of Science and Technology of China, Hefei, 230027, China\\ Email: wenyizha@ustc.edu.cn}

\maketitle
\thispagestyle{empty}
\pagestyle{empty}

\begin{abstract}
As the communication network is in transition towards a commercial one controlled by service providers (SP) , the present paper considers a pricing game in a communication market covered by several wireless access points sharing the same spectrum and analyzes two business models: monopoly (APs controlled by one SP) and oligopoly (APs controlled by different SPs). We use a Stackelberg
game to model the problem: SPs are the leader(s) and end users are the followers. We prove, under certain conditions, the existence and uniqueness of Nash equilibrium for both models and derive their expressions. In order to compare the impact of different business models on social welfare and SPs' profits, we define two metrics: PoCS (price of competition on social welfare) and PoCP (price of competition on profits). For symmetric cross-AP interferences, the tight lower bound of PoCS is $3/4$, and that of PoCP is $1$.
\end{abstract}


\makeatletter
    \newcommand{\rmnum}[1]{\romannumeral #1}
    \newcommand{\Rmnum}[1]{\expandafter\@slowromancap\romannumeral #1@}
\makeatother

\section{\textsc{Introduction}}
\label{sect:introduction}

Resource allocation is a basic issue in communication networks. Decentralized resource allocation has gained extensive attention \cite{Ozdaglar07Incentives}, in which the majority of relevant works treated end users and service providers (SPs) as self-centric agents making decisions to maximize their own utilities. Prices, which are set by SPs, act as the bridge between end users and SPs: SPs set proper prices for the services they provide to achieve network improvement, fairness of resource allocation \cite{Kelly98Rate} or maximization of profits \cite{Basar02Revenue}; on the other hand, end users choose the proper services that maximize their own utilities based on charged prices and the corresponding QoS.

\vspace{10pt}

For wired access networks, pricing models have received extensive research in recent years \cite{Acemoglu2007Competition} \cite{Kelly98Rate}. The congestion condition of a transmission link in wired networks only depends on the flows it carries. As shown in previous works, the pricing problem can be converted to convex optimization problems and can be analyzed using convex analysis techniques. For wireless access networks, however, the study is relatively limited. A main obstacle in wireless access networks is the broadcast nature of wireless medium which induces interferences, thus bringing new challenges to the pricing analysis: congestion is not only caused by the users from the same access network, but also from other networks nearby. As a consequence, pricing models in wireless access networks have mainly concentrated on new scenarios arising from new elements introduced besides the existing networks, such as femtocell \cite{Shetty2009Economics}, WiFi and WiMAX  \cite{Niyato2007Wireless}, thus to some extent avoiding the treatment of interference. In \cite{Vajic09Competition}, the authors analyzed the competition between wireless service providers with context similar to us. However, they assumed that the resources among the SPs are orthogonal and SPs are free from cross-AP interferences. As a consequence, the result that competition among SPs leads to a globally optimal outcome may no longer be valid when cross-AP interferences are considered. The work in \cite{Maille2012Competition} took an initial attempt to analyzing a duopoly pricing model when the interferences among the SPs are not neglected. However, therein the amount of interferences among the SPs are always assumed equal, an assumption overly restrictive for realistic applications. As another limitation in general, most of existing studies only considers the monopoly case or the duopoly case, individually, seldom comparing the two business models together. Consequently, the purpose of our study in this paper is twofold. First, we aim to develop a general framework for analyzing pricing in wireless access networks, with interferences taken into consideration. Second, we aim to compare the impacts of monopoly and duopoly on social welfare and SPs' profits.

\vspace{10pt}

In this paper, we propose a general pricing model for multiple wireless APs, which may be viewed as a generalization of a wired access pricing model \cite{Hayrapetyan2007Network}. We consider a hotspot with a large number of end users covered by several APs sharing the same radio band. We adopt a market demand function to capture user heterogeneity in their willingness-to-pay and examine two business models: monopoly and duopoly, respectively. We formulate such a pricing model as a Stackelberg game: SPs are the leader(s) and end users are the followers, and further use Wardrop's principle to describe users' distribution on APs. We establish the existence and uniqueness of the equilibria of the game for both cases and further analytically characterize the equilibria and the corresponding user distributions, under linear demand function and cross-AP interferences model. In order to compare the impacts of monopoly and duopoly on social welfare and SPs' profits, we define two metrics: PoCS (price of competition on social welfare) which is the ratio of social welfare of monopoly to that of duopoly, and PoCP (price of competition on profits) which is the ratio of profits of monopoly to the sum profits of duopoly. Then for symmetric cross-AP interferences, we establish that PoCS is lower bounded by $3/4$ and PoCP lower bounded by $1$, while both of them are unbounded from above.

\vspace{10pt}

The remaining part of this paper is organized as follows. Section
\ref{sect:system model} describes the system model, formulating wireless APs' pricing problem as a Stackelberg game and setting basic assumptions for subsequent analysis. Section \ref{sect:wardrop equilibrium for users} establishes the existence and uniqueness of user distribution equilibrium following Wardrop's principle for a given price profile. Section \ref{sect:oligopoly analysis} analyzes the price equilibria of both monopoly and
duopoly models, as well as the corresponding user distributions. Section \ref{sect:comparison of monopoly and duopoly}
introduces the concepts of PoCS and PoCP, and analyzes their bounds. Section \ref{sect:numerical} provides some numerical illustrations of our analysis. Finally Section \ref{sect:conclusion} concludes this paper.

\section{\textsc{System Model}}
\label{sect:system model}

\subsection{General Model}
\label{subsect:general model}

Suppose that there is a hotspot covered by $N$ wireless APs which provide access services for a large number of end users. Each AP has an access price $p_i$, $i\in\mathcal{N}$, and the users must take payments before use. Let $x_i$ denote the flow carried on $\mathrm{AP}_i$ and $\mathbf{x}=[x_1,...,x_N]$ denote the vector of flows of all APs. As the congestion level of each AP is related to the flow it serves and that of other APs (due to the broadcast nature of wireless channel), we denote each AP's congestion function as $l_i(\mathbf{x})$. Note that the game we considered in this paper is in large time scale and the congestion function only reflects an averaged congestion level. Then we define the disutility of accessing $\mathrm{AP}_i$ as the sum of
$p_i$ and $l_i(\mathbf{x})$, i.e., $d_i\triangleq p_i+l_i(\mathbf{x})$ (implicitly assuming that all users trade off time and money identically \cite{Hayrapetyan2007Network}). The users are selfish and would access the APs with the least disutility, so the disutility of the market can be expressed as $d\triangleq \min_i d_i$. Because the value of the wireless access service is different for different users, we use a market demand function $D(\cdot)$ to capture this heterogeneity in their willingness-to-pay. For convenience, we consider the inverse of the demand function $u(x)\triangleq D^{-1}(x)$, where $x=\sum_i x_i$. We naturally assume that $D(\cdot)$ is decreasing due to the law of demand \cite{Mankiw2011Principles} which indicates that the total flow of the users is decreasing with the disutility of the market increasing. We further assume that the number of end users is large and each single user's impact on the whole system is negligible, i.e., any user's switching from one AP to another AP does not change the congestion situations of all APs. Thus we can use Wardrop's principle \cite{Wardrop1952Some} to describe users' distribution on all APs. Formally, we define Wardrop equilibrium (WE) as follows.

\begin{defn}[Wardrop Equilibrium]
\label{defn:WE}
Given a set of prices of all APs, $\mathbf{p}$, a flow
distribution $\mathbf{x}^{\mathrm{WE}}$ forms a WE if it satisfies:
\begin{eqnarray}
&1)& \;\; \forall x_i^{\mathrm{WE}}>0, \; d_i=p_i+l_i(\mathbf{x}^{\mathrm{WE}})=u(\sum_j
x_j^{\mathrm{WE}});\label{equ:WE1}\\
&2)& \;\; \forall x_i^{\mathrm{WE}}=0, \; d_i=p_i+l_i(\mathbf{x}^{\mathrm{WE}})\geq u(\sum_j
x_j^{\mathrm{WE}}).\label{equ:WE2}
\end{eqnarray}
\end{defn}

We define the profits of $\mathrm{AP}_i$ as $p_i x_i$ and consumers' surplus as $\int_{0}^{u^{\!-1}\!(d)}\!(u(x)\!-\!d)dx$. Then the social welfare is defined as the sum of APs' profits and consumers' surplus, and can further be written as (at a WE):
\begin{eqnarray}
    \mathrm{SW}(\mathbf{x}^{\mathrm{WE}}) &=& \int_0^{\sum_j x_j^{\mathrm{WE}}}\!\!\!\!\!\!\!\!\!\!\!\! u(x)dx-\sum_j x_j^{\mathrm{WE}} l_j(\mathbf{x}^{\mathrm{WE}}).
\end{eqnarray}

When there exist only two APs, see Figure (\ref{fig:Demand_curve}) for an illustration.

\begin{figure}[!hbp]
\centering
\includegraphics[width=3in]{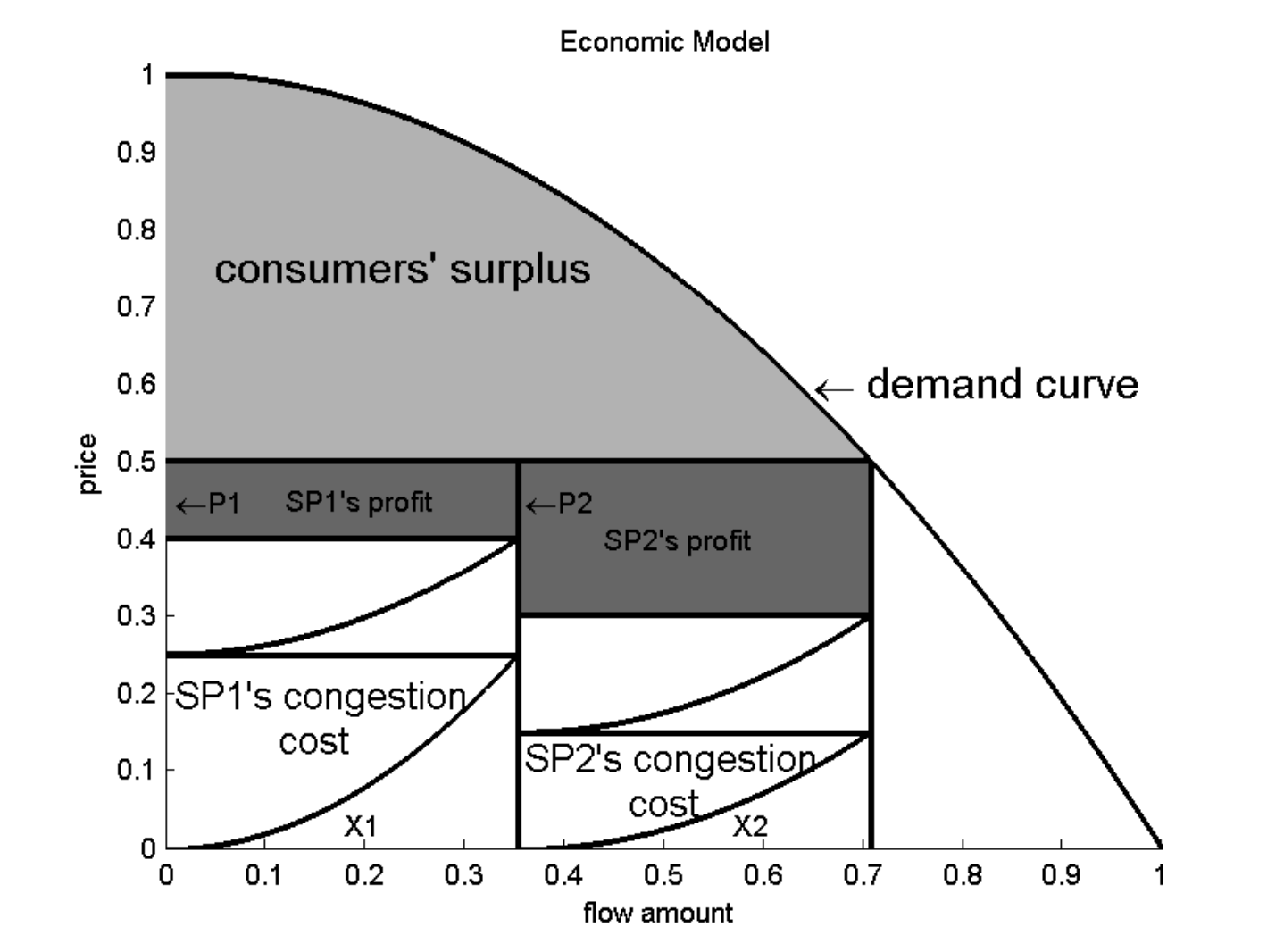}
\centering \caption{Illustration of system model for two-AP case} \label{fig:Demand_curve}
\end{figure}

We formulate this wireless APs pricing model as a Stackelberg game: SPs are the leader(s) and they set prices for the APs first; end users are the followers, they decide whether or not to accept the services and if they do, further decide which APs to access. When all the APs are controlled by a single SP, we have a monopoly market; when the APs are owned by different SPs, we have an oligopoly market. We define monopoly equilibrium (ME) and oligopoly equilibrium (OE) as follows.

\begin{defn}[Monopoly Equilibrium]
A set of prices of all APs, $\mathbf{p}^{\mathrm{ME}}$, forms an ME,
when it satisfies:
\begin{equation}
\mathbf{p}^{\mathrm{ME}} \in \arg \max_{\mathbf{p}} \sum_{i\in \mathcal{I}}
p_ix_i^{\mathrm{WE}}(\mathbf{p}).
\end{equation}
\end{defn}

\begin{defn}[Oligopoly Equilibrium]
A set of prices of all APs, $\mathbf{p}^{\mathrm{OE}}$, forms an OE,
when for any $\mathrm{SP}_i$, the price satisfies:
\begin{equation}
p_i^{\mathrm{OE}} \in \arg \max_{p_i,\mathbf{p}_{-i}^{\mathrm{OE}}}
p_ix_i^{\mathrm{WE}}(p_i,\mathbf{p}_{-i}^{\mathrm{OE}}).
\end{equation}
\end{defn}

\subsection{An SINR Example}
\label{subsect:a motivating model}

A concrete example which motivates and also illustrates the aforementioned general system model is a simple $N$-AP uplink wireless access network. Assume that all users' average transmit power level is a common constant $P$, and that the frequency-flat channel gains of all the user links connecting to the same AP are also identical (averaged in large time scale). Suppose that the average number of users served by $\mathrm{AP}_i$ is $X_i$ and the channel gain from the users of $\mathrm{AP}_j$ to that of $\mathrm{AP}_i$ is $g_{j,i}$ . We focus on the scenario where the number of users is large, and thus the noise as well as each single user's impact on SINR are negligible. Thus the average signal to interference plus noise ratio (SINR) at $\mathrm{AP}_i$ is expressed as (with bandwidth normalized)

\begin{eqnarray}
\mathrm{SINR}_i &\doteq& \frac{g_{i,i}P}{N+g_{i,i}PX_i
+\sum_{j\neq i} g_{j,i}PX_j}\nonumber \\
&\doteq& \frac{g_{i,i}}{g_{i,i}X_i+\sum_{j\neq i} g_{j,i}X_j}.
\end{eqnarray}

\noindent When considering a coding scheme using random Gaussian codebooks and single-user decoding treating interference as noise, a typical user connecting to $\mathrm{AP}_i$ achieves rate (normalized) $r_i=\log(1+\mathrm{SINR}_i)$. Since when the number of users is large and the SINR is low,we have $r_i \approx \mathrm{SINR}_i/\ln 2$. Let each user transmit (on average) a fixed amount $\mu$ of information bits for accessing an AP. As a result, the averaged delay a typical user of $\mathrm{AP}_i$ experiences is

\begin{equation}
l_i(X_1,\cdots,X_N)=\frac{\mu}{r_i}=\mu\frac{g_{i,i}X_i+\sum_{j\neq i}g_{j,i}X_j}{g_{i,i}}\ln2.
\end{equation}

\noindent We then define $x_i\triangleq \mu X_i\ln2$ as the aggregated flow carried on $\mathrm{AP}_i$, and make the substitutions $\forall j\neq i$, $\tilde{g}_{j,i}\triangleq g_{j,i}/g_{i,i}$ and $\tilde{g}_{i,i}=1$ ($G=[\tilde{g}_{i,j}]_{i,j}$).
Thus the disutility of accessing $\mathrm{AP}_i$ is

\begin{equation}
\label{equ:delay function}
d_i = p_i + \sum_{j}\tilde{g}_{j,i}x_j.
\end{equation}

\begin{rem}
In the SINR example above, the linear structure of the disutility function mainly results from two assumptions:
\noindent (1) the problem is considered in a large time scale, and hence the fading effect of wireless channel can be averaged and users' consumption behavior can accord to some market demand curve;
\noindent (2) end users are gathered, and hence their channels to an AP are homogeneous in the long run.
However, as we shall see in the subsequent analysis, the problem is still complicated because both the pricing strategy of SPs and accessing AP strategy of users are intertwined due to the presence of interference, and the solution to the problem exhibits interesting behavior.
\end{rem}

\section{\textsc{Wardrop Equilibrium for Users}}
\label{sect:wardrop equilibrium for users}
In this section, we begin with a study on the existence and uniqueness of user distribution equilibrium, i.e., WE, and we analyze the characteristics of WE.

\subsection{Existence of Wardrop Equilibrium}

Borrowing ideas from proofs in \cite{aashtiani1981equilibria}, we establish the following theorem.

\begin{thm}[Existence of WE]
\label{thm:existence of WE}
For any given positive prices $(p_1,\cdots,p_N)$, if the inverse of the user demand function $u(x)$ is continuous, bounded, non-negative and decreasing, and $\forall i$, $l_i(\mathbf{x})$ is continuous and satisfies $l_i(\mathbf{x})\geq x_i$, then there exits $\mathbf{x}^{\mathrm{WE}}$ constituting a WE.
\end{thm}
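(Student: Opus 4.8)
The plan is to recast the Wardrop conditions \eqref{equ:WE1}--\eqref{equ:WE2} as a nonlinear complementarity problem (NCP) and then produce a solution via Brouwer's fixed-point theorem, following the fixed-point strategy of \cite{aashtiani1981equilibria}. For each $i$ define the excess-disutility map
\begin{equation}
f_i(\mathbf{x}) \triangleq p_i + l_i(\mathbf{x}) - u\Big(\textstyle\sum_j x_j\Big),
\end{equation}
which is continuous by the hypotheses on $l_i$ and $u$. A direct inspection shows that $\mathbf{x}^{\mathrm{WE}}$ satisfies \eqref{equ:WE1}--\eqref{equ:WE2} if and only if it solves the NCP over the nonnegative orthant: $\mathbf{x}\geq 0$, $\mathbf{f}(\mathbf{x})\geq 0$, and $x_i f_i(\mathbf{x})=0$ for every $i$. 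Indeed, $x_i>0$ forces $f_i=0$, i.e.\ $d_i=u(\sum_j x_j)$, while $x_i=0$ only requires $f_i\geq 0$, i.e.\ $d_i\geq u(\sum_j x_j)$.

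First I would derive an a priori bound on any WE. Since $u$ is bounded, decreasing and nonnegative, set $\bar u \triangleq u(0) = \sup_x u(x) < \infty$. Whenever $x_i>0$, condition \eqref{equ:WE1} together with $l_i(\mathbf{x})\geq x_i$ and $p_i>0$ gives $x_i \leq p_i + l_i(\mathbf{x}) = u(\sum_j x_j) \leq \bar u$. Fixing any $M>\bar u$ then confines all relevant flows to the compact convex box $K\triangleq [0,M]^N$. The growth condition $l_i(\mathbf{x})\geq x_i$ is exactly what supplies this coercivity and keeps the flows from escaping to infinity.

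Next I would solve the problem on $K$ by a projection-based fixed point. Let $P_K$ be the Euclidean projection onto $K$, and define $T(\mathbf{x}) \triangleq P_K\big(\mathbf{x}-\mathbf{f}(\mathbf{x})\big)$. Because $P_K$ is (Lipschitz) continuous and $\mathbf{f}$ is continuous, $T$ is a continuous self-map of the compact convex set $K$, so Brouwer's theorem furnishes an $\mathbf{x}^{\ast}$ with $T(\mathbf{x}^{\ast})=\mathbf{x}^{\ast}$. A standard property of the projection identifies this fixed point with a solution of the variational inequality on $K$, namely $\langle \mathbf{f}(\mathbf{x}^{\ast}),\, \mathbf{y}-\mathbf{x}^{\ast}\rangle \geq 0$ for all $\mathbf{y}\in K$; componentwise this reads $f_i(\mathbf{x}^{\ast})=0$ when $0<x_i^{\ast}<M$, $f_i(\mathbf{x}^{\ast})\geq 0$ when $x_i^{\ast}=0$, and $f_i(\mathbf{x}^{\ast})\leq 0$ when $x_i^{\ast}=M$.

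The last step, and the crux, is to rule out the upper-boundary case $x_i^{\ast}=M$, which is what upgrades the box variational inequality to the NCP on the full nonnegative orthant. Here the growth hypothesis enters decisively: if $x_i^{\ast}=M$, then $f_i(\mathbf{x}^{\ast})\geq p_i + x_i^{\ast} - \bar u = p_i + M - \bar u > 0$, contradicting $f_i(\mathbf{x}^{\ast})\leq 0$. Hence no component attains the upper bound, the constraint $x_i\le M$ is inactive, and the two surviving cases are precisely the complementarity conditions \eqref{equ:WE1}--\eqref{equ:WE2}, so $\mathbf{x}^{\ast}$ is a WE. I expect the main obstacle to be organizing this boundary argument cleanly---choosing the right $M$ and confirming that the coercivity from $l_i(\mathbf{x})\geq x_i$ simultaneously bounds the flows and forbids spurious upper-boundary solutions---rather than the fixed-point invocation, which is routine once $K$ and $T$ are in place.
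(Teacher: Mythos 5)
Your proof is correct and follows essentially the same route as the paper: both recast the WE conditions as an NCP and apply Brouwer's fixed-point theorem to a projection-type map, with the hypothesis $l_i(\mathbf{x})\geq x_i$ and the bound $u(0)$ supplying the compact box. The only difference is one of packaging: the paper uses the orthant-projection map $\phi_i(\mathbf{x})=[x_i-F_i(\mathbf{x})]^+$ and shows it sends $[0,u(0)]^N$ into itself, so its fixed points are automatically NCP solutions, whereas you project onto the truncated box $K$ and then exclude the artificial upper boundary $x_i^{\ast}=M$ --- two standard ways of running the same argument.
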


\begin{proof}
From Definition \ref{defn:WE}, conditions (\ref{equ:WE1}) and (\ref{equ:WE2}) of WE can be written in the form of a nonlinear complementarity problem (NCP), i.e., for $\forall i \in \{1,\cdots,N\}$, $\mathbf{x}^{\mathrm{WE}}$ satisfies:
\begin{equation}
    \begin{cases}
    x_i^{\mathrm{WE}}[p_i+l_i(\mathbf{x}^{\mathrm{WE}})-u(\sum_{j} x_j^{\mathrm{WE}})] =0\\
    x_i^{\mathrm{WE}}\geq 0, \; p_i+l_i(\mathbf{x}^{\mathrm{WE}})-u(\sum_{j} x_j^{\mathrm{WE}})]\geq 0.
    \end{cases}
\end{equation}
We construct a mapping $\phi = (\phi_1,\cdots,\phi_N)$, where $\phi_i(\mathbf{x}^{\mathrm{WE}})=[x_i^{\mathrm{WE}}-F_i(\mathbf{x}^{\mathrm{WE}})]^+$, $F_i(\mathbf{x}^{\mathrm{WE}}) \triangleq p_i+l_i(\mathbf{x}^{\mathrm{WE}})-u(\sum_{j} x_j^{\mathrm{WE}})$ and $[x]^+ = \max\{x,0\}$. From \cite[Thm. 5.3]{aashtiani1981equilibria}, the existence of the solution of NCP is equivalent to the existence of a fixed point of the mapping $\phi = (\phi_1,\cdots,\phi_N)$. Under the conditions $\forall i, \; l_i(\mathbf{x}^{\mathrm{WE}}) \geq x_i^{\mathrm{WE}}$ and $u(\cdot)$ a decreasing function, for $\forall i \in \{1,\cdots,N\}$, we have $x_i^{\mathrm{WE}}-F_i(\mathbf{x}^{\mathrm{WE}}) \leq u(0)$. As a result, for all $\forall \mathbf{x}^{\mathrm{WE}} \in [0,u(0)]^N$, we have $\phi(\mathbf{x}^{\mathrm{WE}}) \in [0,u(0)]^N$. Applying Brouwer fixed-point theorem, at least one fixed point $\mathbf{x}^*$ exists and the corresponding WE is just $\mathbf{x}^{\mathrm{WE}} = \mathbf{x}^*$.
\end{proof}

Clearly the delay function $l_i(\cdot)$ in (\ref{equ:delay function}) of the SINR example satisfies the conditions of Theorem \ref{thm:existence of WE}, and thus the existence of WE therein is guaranteed.

\subsection{Uniqueness of Wardrop Equilibrium}
For establishing the uniqueness of WE, we focus on linear environment, i.e., two linear conditions (LC):
\begin{enumerate}
\item[LC1:] congestion function $l_i(\mathbf{x})$ has a linear structure, $l_i(\mathbf{x})=\sum_{j}\tilde{g}_{j,i}x_j$ ($\forall i,j,\;\tilde{g}_{i,j}>0$);
\item[LC2:] the inverse of the user demand function is linear, $u(x)=w-sx$ ($w,s>0$).
\end{enumerate}

We rewrite the WE conditions (\ref{equ:WE1}) and (\ref{equ:WE2}) in a compact form (for compactness of representation, we will use $\mathbf{x}$ instead of $\mathbf{x}^{\mathrm{WE}}$) as

\begin{eqnarray}
\mathbf{x} &\geq& \mathbf{0} \label{equ:constr1}\\
M \mathbf{x}+\mathbf{q} &\geq& \mathbf{0} \label{equ:constr2}\\
\mathbf{x}^T(M\mathbf{x}+\mathbf{q})&=& 0 \label{equ:constr3}
\end{eqnarray}
\noindent where $M=G^T+\mathbf{1}\mathbf{s}^T$, $\mathbf{s}=[s,\cdots,s]^T$,  $\mathbf{w}=[w,\cdots,w]^T$, $\mathbf{p}=[p_1,\cdots,p_N]^T $ and $\mathbf{q}=\mathbf{p}-\mathbf{w}$.
The conditions above form a linear complementarity problem (LCP) and we can utilize some existing results as follows.

\begin{defn}[P-Matrix]
\label{defn:p matrix}
A square matrix $M$ is called a P-matrix if all its principal sub-determinants are strictly positive.
\end{defn}
\begin{lem}[Uniqueness of LCP] \cite[Thm. 4.2]{Murty1972on}
\label{lemma:murty}
The system (10)-(12) has a unique solution for each $\mathbf{q} \in \mathcal{R}^N$ if and only if $M$ is a P-matrix.
\end{lem}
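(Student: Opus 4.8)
The plan is to prove the two implications of the equivalence separately, since they rely on quite different machinery. For the forward (``if'') direction — assuming $M$ is a P-matrix and deducing a unique solution for every $\mathbf{q}\in\mathcal{R}^N$ — I would first dispatch uniqueness, the cleaner half. Suppose $\mathbf{x}^1,\mathbf{x}^2$ both solve (\ref{equ:constr1})--(\ref{equ:constr3}); write $\mathbf{w}^k=M\mathbf{x}^k+\mathbf{q}$ and set $\mathbf{z}=\mathbf{x}^1-\mathbf{x}^2$. The complementarity condition (\ref{equ:constr3}) together with (\ref{equ:constr1})--(\ref{equ:constr2}) forces $x_i^k w_i^k=0$ componentwise, so expanding gives $z_i(M\mathbf{z})_i=(x_i^1-x_i^2)(w_i^1-w_i^2)=-x_i^1 w_i^2-x_i^2 w_i^1\leq 0$ for every $i$. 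I would then invoke the Fiedler--Pt\'ak characterization: $M$ is a P-matrix if and only if for every nonzero $\mathbf{z}$ there is an index $i$ with $z_i(M\mathbf{z})_i>0$. This contradicts the inequality just derived unless $\mathbf{z}=\mathbf{0}$, so $\mathbf{x}^1=\mathbf{x}^2$.

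For existence under the same hypothesis, I would pass to the piecewise-linear normal map $H$ with $H_i(\mathbf{x})=\min\{x_i,(M\mathbf{x}+\mathbf{q})_i\}$, whose zeros are precisely the LCP solutions. The strategy is a topological degree argument: the linearity pieces of $H$ are indexed by sign patterns $\alpha\subseteq\{1,\dots,N\}$, and on each piece $H$ is linear with Jacobian determinant equal to a principal minor of $M$, hence strictly positive by Definition \ref{defn:p matrix}. Consistent positivity of these determinants, together with the coercivity of $H$, forces the Brouwer degree of $H$ over a large ball to be $\pm 1$, so a zero must exist. Alternatively one can run Lemke's pivoting scheme and argue from the sign pattern of the principal minors that it cannot ray-terminate and therefore returns a solution.

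The converse (``only if'') direction I would establish by contraposition through the geometry of complementary cones. Each index set $\alpha$ defines a cone $C_\alpha$ spanned by the columns $M_{\cdot j}$ for $j\in\alpha$ and $-\mathbf{e}_j$ for $j\notin\alpha$; the system is solvable for a given $\mathbf{q}$ exactly when $-\mathbf{q}$ lies in some $C_\alpha$, and the solution is unique exactly when these cones have pairwise disjoint interiors. Since the signed volume of $C_\alpha$ is governed by the principal minor $\det M_{\alpha\alpha}$, I would show that if some principal minor is nonpositive then two complementary cones either overlap (producing a $\mathbf{q}$ with multiple solutions) or fail to tile a neighborhood (producing a $\mathbf{q}$ with none), in either case breaking the ``unique solution for each $\mathbf{q}$'' hypothesis.

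The step I expect to be the main obstacle is existence in the forward direction. The sign argument for uniqueness is elementary once Fiedler--Pt\'ak is in hand, but guaranteeing a solution for \emph{every} $\mathbf{q}$ requires the global degree-theoretic (or pivoting) argument, where the delicate point is controlling the behavior of $H$ across the cone boundaries and verifying that the consistent positivity of the principal minors genuinely forces a nonzero degree rather than merely local invertibility.
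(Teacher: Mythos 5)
The paper never proves this lemma at all: it is imported verbatim, with the citation \cite[Thm. 4.2]{Murty1972on}, as a known result about linear complementarity problems, and then used as a black box to get uniqueness of the WE. So your proposal is not competing with a proof in the paper but with the classical literature proofs --- and measured against those, your plan is sound and is essentially the standard synthesis. The uniqueness half is correct and essentially complete as written: complementarity plus nonnegativity in (\ref{equ:constr1})--(\ref{equ:constr3}) forces $x_i^k w_i^k=0$ componentwise, the cross-term expansion gives $z_i(M\mathbf{z})_i\le 0$ for all $i$, and the Fiedler--Pt\'ak sign-reversal characterization of P-matrices (Definition \ref{defn:p matrix}) finishes it. The existence half (degree theory on the piecewise-linear min-map, or equivalently Lemke's method with no ray termination) and the converse (the Samelson--Thrall--Wesler complementary-cone geometry) are the two recognized routes, and your identification of the piece Jacobians of $H$ with the principal minors of $M$ is exactly right, including the all-$x$ piece whose determinant is the empty principal minor $1$. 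What each treatment buys is clear: the paper's citation keeps Section III short, since the lemma is only a tool; your argument makes the result self-contained at the cost of genuine machinery. Two places in your sketch would still need real work before they constitute a proof: (i) properness (coercivity) of $H$ is asserted rather than shown --- it follows by normalization (if $\|\mathbf{x}^k\|\to\infty$ while $H(\mathbf{x}^k)$ stays bounded, any limit direction $\bar{\mathbf{z}}$ of $\mathbf{x}^k/\|\mathbf{x}^k\|$ satisfies $\bar{z}_i(M\bar{\mathbf{z}})_i\le 0$ for all $i$, contradicting Fiedler--Pt\'ak), and without properness the global degree is not even defined; (ii) in the converse, a vanishing principal minor yields a degenerate, lower-dimensional complementary cone, so the dichotomy ``overlap or fail to tile'' needs a perturbation argument to cover that boundary case. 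Neither is a wrong step; both are precisely where the classical proofs spend their effort, and you correctly flagged the first as the main obstacle.
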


In the case considered in this article, the unique solution of LCP (\ref{equ:constr1})-(\ref{equ:constr3}) corresponds to the unique WE. In many situations, the interferences among the APs are weak, and thus it motivates us to examine if some ``weak interference'' condition can lead to the uniqueness of WE.

\begin{thm}[Uniqueness of WE]
\label{thm:uniqueness of WE}
Under the assumptions of LC1 and LC2, for any given non-negative prices $(p_1,\cdots,p_N)$, if the cross-AP interferences are weak, i.e., $\forall i$, $\sum_{j\neq i}(\tilde{g}_{i,j}+\tilde{g}_{j,i})<2\tilde{g}_{i,i}$, then WE exists and is unique.
\end{thm}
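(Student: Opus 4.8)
The plan is to exploit the LCP reformulation (\ref{equ:constr1})--(\ref{equ:constr3}) together with Lemma \ref{lemma:murty}. Since that lemma asserts that the LCP has a \emph{unique} solution for every $\mathbf{q}\in\mathcal{R}^N$ \emph{if and only if} $M$ is a P-matrix, it suffices to prove that the weak-interference hypothesis forces $M=G^T+\mathbf{1}\mathbf{s}^T$ to be a P-matrix. A pleasant feature of this route is that this single step delivers both conclusions at once: a P-matrix $M$ makes the LCP solvable (existence of a WE) and makes its solution unique (uniqueness of the WE), so in the linear setting I would not even need to reinvoke Theorem \ref{thm:existence of WE}.

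To establish the P-matrix property I would use the standard sufficient condition that a real square matrix whose symmetric part is positive definite is a P-matrix. Accordingly I would first compute the symmetric part of $M$. Writing $\mathbf{1}\mathbf{s}^T=s\,\mathbf{1}\mathbf{1}^T$ (recall $\mathbf{s}=s\mathbf{1}$), one obtains
\begin{equation}
M+M^T=(G+G^T)+2s\,\mathbf{1}\mathbf{1}^T .
\end{equation}
The second summand $2s\,\mathbf{1}\mathbf{1}^T$ is rank-one and positive semidefinite (since $s>0$), so it can only help: if $G+G^T$ is positive definite, then so is $M+M^T$. This decomposition is the conceptual crux, as it cleanly separates the interference part from the demand part.

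It then remains to show $G+G^T$ is positive definite, and this is exactly where the hypothesis enters. The matrix $G+G^T$ is symmetric with diagonal entries $2\tilde{g}_{i,i}>0$ and nonnegative off-diagonal entries $\tilde{g}_{i,j}+\tilde{g}_{j,i}>0$. Strict diagonal dominance of row $i$ reads $2\tilde{g}_{i,i}>\sum_{j\neq i}(\tilde{g}_{i,j}+\tilde{g}_{j,i})$, which is precisely the weak cross-AP interference condition in the statement. A symmetric, strictly diagonally dominant matrix with positive diagonal is positive definite (Gershgorin), so $G+G^T$ is positive definite, hence $M+M^T$ is positive definite, hence $M$ is a P-matrix, and Lemma \ref{lemma:murty} finishes the proof.

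The main obstacle is not the diagonal-dominance check, which is routine once the decomposition above is seen, but rather justifying the bridging fact that a positive-definite symmetric part implies the P-matrix property required by Definition \ref{defn:p matrix}. The clean argument is: every principal submatrix of $M$ has symmetric part equal to the corresponding principal submatrix of $\tfrac12(M+M^T)$, which is again positive definite; and any real matrix with positive-definite symmetric part has all eigenvalues with strictly positive real part (for an eigenpair $(\lambda,v)$, $\operatorname{Re}\lambda\,\|v\|^2=v^{*}\tfrac12(M+M^T)v>0$), so its determinant, being a product of positive reals and conjugate pairs of modulus-squared terms, is strictly positive. Applying this to every principal submatrix yields positivity of all principal minors. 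The one subtlety to keep in view is that the interference hypothesis is a \emph{strict} inequality, which is what upgrades diagonal dominance to strict positive definiteness rather than mere semidefiniteness, and hence to strictly positive principal minors.
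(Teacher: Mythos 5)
Your proof is correct and takes essentially the same route as the paper's: both reduce the WE conditions to the LCP (\ref{equ:constr1})--(\ref{equ:constr3}), use Ger\v{s}gorin/diagonal dominance to show $G+G^T$ is positive definite, observe that the rank-one demand term $s\,\mathbf{1}\mathbf{1}^T$ is positive semidefinite so that $M$ has positive-definite symmetric part, and conclude via Lemma \ref{lemma:murty} that $M$ is a P-matrix, giving existence and uniqueness at once. The only difference is that you explicitly justify the bridging step that a positive-definite symmetric part implies all principal minors are positive, a fact the paper asserts without proof; this is a welcome addition but not a different approach.
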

\begin{proof}
From Ger\v{s}gorin's discs theorem (see, e.g., \cite{horn1990matrix}), the condition $\sum_{j\neq i}(\tilde{g}_{i,j}+\tilde{g}_{j,i})<2\tilde{g}_{i,i}$ implies that all the eigenvalues of the symmetric matrix $(G+G^T)/2$ are positive and hence $(G+G^T)/2$ is a positive-definite matrix. Because $\forall \mathbf{x}\in \mathcal{R}^N$, we have $\mathbf{x}^T \mathbf{1}\mathbf{s}^T \mathbf{x} =s(\sum_i x_i)^2\geq 0$. Thus we have $\forall \mathbf{x}\in \mathcal{R}^N$, $\mathbf{x}^T((G+G^T)/2)\mathbf{x}+\mathbf{x}^T(\mathbf{1}\mathbf{s}^T)\mathbf{x} > 0$, i.e., $\mathbf{x}^T (M+M^T)/2 \mathbf{x}=\mathbf{x}^T M \mathbf{x}>0$. Therefore, $M$ is a positive-definite matrix and then a P-matrix. Using Lemma \ref{lemma:murty}, we complete the proof.
\end{proof}

To simplify the analysis, we will only focus on weak cross-AP interferences throughout the remaining analysis of this paper.

\subsection{Characterization of Wardrop Equilibrium for Two APs}
\label{subsect:character}

From the analysis above, we get a clear picture about the number of WEs when the interferences between APs are weak. But when the interferences become strong, how things would be like? In the following, we will analyze a two-AP case to illustrate under assumption LC1 only and try to gain some insights into this problem.

At first, we study how to determine WE given the price vector $\mathbf{p}=(p_1,p_2)$ of SPs. Without loss of generality (WLOG), we assume that $p_1 < p_2$, and the analysis when $p_1 \geq p_2$ is similar. We will divide our analysis into three cases. For convenience, we use substitutions $a_2\triangleq \tilde{g}_{2,1}$, $ b_1\triangleq \tilde{g}_{1,2}$ and $\tilde{g}_{1,1}=\tilde{g}_{2,2}=1$.

\noindent (1) When $\mathbf{p}$ satisfies $p_1,p_2 \geq u(0)$, there is a unique WE $x_1^{\mathrm{WE}}=x_2^{\mathrm{WE}}=0$, which means that the prices set by the SPs are so high that none of users has any incentive to access the network for wireless services.

\noindent (2) When $\mathbf{p}$ satisfies $p_1 < u(0), \; p_2 \geq u(0)$, we have $x_2^{\mathrm{WE}} = 0$ and $x_1^{\mathrm{WE}}$ satisfying  $x_1^{\mathrm{WE}}+p_1 = u(x_1^{\mathrm{WE}})$. Because the left part of the equation, $x_1^{\mathrm{WE}}+p_1$, is monotonically increasing in $x_1^{\mathrm{WE}}$ and the right part of the equation, $u(x_1^{\mathrm{WE}})$, is monotonically decreasing in $x_1^{\mathrm{WE}}$, they must have a single intersection point, and the WE is unique.

\noindent (3) When $\mathbf{p}$ satisfies $0 \leq p_1 < p_2 \leq u(0)$, the situation becomes more involved. Under this condition, $x_1^{\mathrm{WE}}$ and $x_2^{\mathrm{WE}}$ cannot be zero simultaneously (which can be verified by contradiction). Given the market disutility $d$, we define the total demand function as $f(d)\triangleq \sum_{i=1}^{2} x_i^{\mathrm{WE}}(d)$. Thus the intersection point between $u^{-1}(d)$ and $f(d)$ determines the market disutility $d$ at WE. As $u^{-1}(d)$ is a strictly decreasing function in market disutility $d$, while $f(d)$ is not always increasing in $d$ as we will see, there may exist multiple intersection points between $f(d)$ and $u^{-1}(d)$ which correspond to multiple WEs. It is different from the results in \cite{Hayrapetyan2007Network} when cross-AP interferences are not considered. In the sequel, we study the analytical expressions of $f(d)$ under different cases for fixed disutility level.
\begin{enumerate}
    \item When $d$ satisfies $0 \leq d \leq p_1$, we have $x_1^{\mathrm{WE}} = x_2^{\mathrm{WE}} = 0$ (the disutility that end users can stand is smaller than the lowest price, and thus no one would like to access the network), and hence $f(d)=0$;
    \item When $d$ satisfies $p_1<d \leq p_2$, we have $x_2^{\mathrm{WE}} = 0$ and $x_1^{\mathrm{WE}} = d-p_1$ (utilizing equation (\ref{equ:WE1}) and (\ref{equ:WE2})) and hence $f(d)=d-p_1$;
    \item When $d$ satisfies $p_2 < d < u(0)$, there are three cases depending on the strength of cross-AP interferences $(a_2,b_1)$.
        \begin{enumerate}
          \item If $x_1^{\mathrm{WE}}>0$ and $x_2^{\mathrm{WE}}>0$, the conditions of WE are
              \begin{equation}
              \begin{cases}
              p_1+l_1(\mathbf{x}^{\mathrm{WE}})=d; \nonumber\\
              p_2+l_2(\mathbf{x}^{\mathrm{WE}})=d. \nonumber
              \end{cases}
              \end{equation}
              Thus we can derive user distributions between the two APs as
              \begin{equation}
              \begin{cases}
              x_1^{\mathrm{WE}}=\frac{(1-a_2)d+a_2p_2-p_1}{1-a_2b_1}; \nonumber\\
              x_2^{\mathrm{WE}}=\frac{(1-b_1)d+b_1p_1-p_2}{1-a_2b_1}. \nonumber
              \end{cases}
              \end{equation}
              Hence the cross-AP interferences region $\{(a_2,b_1)|a_2<\frac{d-p_1}{d-p_2},b_1<\frac{d-p_2}{d-p_1} \;\mathrm{or}\;a_2>\frac{d-p_1}{d-p_2},b_1>\frac{d-p_2}{d-p_1}\}$ can lead to $x_1^{\mathrm{WE}}>0$ and $x_2^{\mathrm{WE}}>0$. The total demand function is
              \begin{equation}
              f(d)= \frac{(2-a_2-b_1)d+(a_2-1)p_2+(b_1-1)p_1}{1-a_2b_1}. \nonumber
              \end{equation}
          \item Similarly, if $x_1^{\mathrm{WE}}>0$ and $x_2^{\mathrm{WE}}=0$, then the conditions of WE become
              \begin{equation}
              \begin{cases}
              p_1+l_1(\mathbf{x}^{\mathrm{WE}})=d; \nonumber\\
              p_2+l_2(\mathbf{x}^{\mathrm{WE}})\geq d. \nonumber
              \end{cases}
              \end{equation}
              and we have $x_1^{\mathrm{WE}}=d-p_1$.
              Cross-AP interferences region should satisfy $\{(a_2,b_1)|b_1\geq \frac{d-p_2}{d-p_1} \}$ and the total demand function is $f(d)=d-p_1$.
          \item If $x_1^{\mathrm{WE}}=0$ and $x_2^{\mathrm{WE}}>0$, we have $x_2^{\mathrm{WE}}=d-p_2$ and cross-AP interferences region should satisfy $\{(a_2,b_1)|a_2\geq \frac{d-p_1}{d-p_2}\}$. The total demand function is $f(d)=d-p_2$.
        \end{enumerate}

         To sum up, we divide cross-AP interferences region into four sub-regions (see Figure (\ref{fig:division1})):
        \begin{enumerate}
            \item when $(a_2,b_1)$ falls in region (\Rmnum{1}), we have $x_1^{\mathrm{WE}}>0$, $x_2^{\mathrm{WE}}>0$ and $f(d)=\frac{(2-a_2-b_1)d+(a_2-1)p_2+(b_1-1)p_1}{1-a_2b_1}$;
            \item when $(a_2,b_1)$ falls in region (\Rmnum{2}), we have $x_1^{\mathrm{WE}}=0$, $x_2^{\mathrm{WE}}>0$ and $f(d)=d-p_2$;
            \item when $(a_2,b_1)$ falls in region (\Rmnum{3}), three cases exist:
                \begin{itemize}
                    \item $x_1^{\mathrm{WE}}>0$, $x_2^{\mathrm{WE}}>0$ and $f(d)=\frac{(2-a_2-b_1)d+(a_2-1)p_2+(b_1-1)p_1}{1-a_2b_1}$,
                    \item or $x_1^{\mathrm{WE}}=0$, $x_2^{\mathrm{WE}}>0$ and $f(d)=d-p_2$,
                    \item or $x_1^{\mathrm{WE}}>0$, $x_2^{\mathrm{WE}}=0$ and $f(d)=d-p_1$;
                \end{itemize}
            \item when $(a_2,b_1)$ falls in region (\Rmnum{4}), we have $x_1^{\mathrm{WE}}>0$, $x_2^{\mathrm{WE}}=0$ and $f(d)=d-p_1$.
        \end{enumerate}
\end{enumerate}
\noindent However, when market disutility $d$ changes, the region that $(a_2,b_1)$ belongs to in Figure (\ref{fig:division1}) also changes, we need to have the region be further divided and the division should be independent of market disutility $d$. Because the intersection point of the four sub-regions in Figure (\ref{fig:division1}) is $\left(\frac{d-p_1}{d-p_2},\frac{d-p_2}{d-p_1}\right)$, when $d$ increases from $p_2$ to $u(0)$, the intersection point moves along the trajectory of $a_2b_1=1$, which enables us to further divide the cross-AP interferences region into five sub-regions (see Figure (\ref{fig:division2})). Due to aforementioned analysis, we are able to get a detailed characterization of the total demand function $f(d)$ as follows:
\begin{enumerate}
    \item when $(a_2,b_1)$ falls in region (a) in Figure (\ref{fig:division2}), i.e., $\{(a_2,b_1)| 0\leq a_2<\frac{u(0)-p_1}{u(0)-p_2}, \; 0\leq b_1<\frac{u(0)-p_2}{u(0)-p_1}\}$, as $d$ increases from $p_2$ to $u(0)$, the region that $(a_2,b_1)$ belongs to in Figure (\ref{fig:division1}) moves from region (\Rmnum{4}) to region (\Rmnum{1}), and thus we have
        \begin{equation}
        f(d)=
        \begin{cases}
        d-p_1, \; \mathrm{when}\; p_2<d \leq \frac{p_2-b_1p_1}{1-b_1}; \\
        \frac{(2-a_2-b_1)d+(a_2-1)p_2+(b_1-1)p_1}{1-a_2b_1}, \; \\
        \;\;\;\;\;\;\;\;\;\; \mathrm{when}\; \frac{p_2-b_1p_1}{1-b_1}< d < u(0);
        \end{cases}
        \end{equation}
    \item similarly, when $(a_2,b_1)$ falls in region (b), i.e., $\{(a_2,b_1)|a_2\geq \frac{u(0)-p_1}{u(0)-p_2}, \; 0\leq b_1\leq\frac{u(0)-p_2}{u(0)-p_1}, \;a_2b_1<1\}$, as $d$ increases from $p_2$ to $u(0)$, it moves from region (\Rmnum{4}) to region (\Rmnum{1}) and finally moves to region (\Rmnum{2}), and thus we have
        \begin{equation}
              f(d)=
              \begin{cases}
              d-p_1, \; \mathrm{when}\; p_2<d \leq \frac{p_2-b_1p_1}{1-b_1}; \\
              \frac{(2-a_2-b_1)d+(a_2-1)p_2+(b_1-1)p_1}{1-a_2b_1}, \; \\
              \;\;\;\;\;\;\;\;\;\; \mathrm{when}\; \frac{p_2-b_1p_1}{1-b_1}< d < \frac{p_1-a_2p_2}{1-a_2};\\
              d-p_2, \; \mathrm{when}\; \frac{p_1-a_2p_2}{1-a_2}\leq d <u(0);
              \end{cases}
        \end{equation}
    \item when $(a_2,b_1)$ falls in region (c), i.e., $\{(a_2,b_1)|a_2\geq \frac{u(0)-p_1}{u(0)-p_2}, \; 0\leq b_1\leq\frac{u(0)-p_2}{u(0)-p_1}, \;a_2b_1>1\}$, it moves from region (\Rmnum{4}) to region (\Rmnum{3}) and finally moves to region (\Rmnum{2}), and thus we have
        \begin{equation}
              f(d)=
              \begin{cases}
              d-p_1, \; \mathrm{when}\; p_2<d \leq \frac{p_1-a_2p_2}{1-a_2}; \\
              \frac{(2-a_2-b_1)d+(a_2-1)p_2+(b_1-1)p_1}{1-a_2b_1} \; \\
              \mathrm{or} \; d-p_1 \\
              \mathrm{or} \; d-p_2, \\
              \;\;\;\;\;\;\;\;\;\; \mathrm{when}\; \frac{p_1-a_2p_2}{1-a_2}< d < \frac{p_2-b_1p_1}{1-b_1};\\
              d-p_2, \; \mathrm{when}\; \frac{p_2-b_1p_1}{1-b_1}\leq d <u(0);
              \end{cases}
        \end{equation}
    \item when $(a_2,b_1)$ falls in region (d), i.e., $\{(a_2,b_1)|a_2 > \frac{u(0)-p_1}{u(0)-p_2}, \; b_1 > \frac{u(0)-p_2}{u(0)-p_1}\}$, it moves from region (\Rmnum{4}) to region (\Rmnum{3}), and thus we have
        \begin{equation}
              f(d)=
              \begin{cases}
              d-p_1, \; \mathrm{when}\; p_2<d \leq \frac{p_1-a_2p_2}{1-a_2}; \\
              \frac{(2-a_2-b_1)d+(a_2-1)p_2+(b_1-1)p_1}{1-a_2b_1} \; \\
              \mathrm{or} \; d-p_1 \\
              \mathrm{or} \; d-p_2, \\
              \;\;\;\;\;\;\;\;\;\; \mathrm{when}\; \frac{p_1-a_2p_2}{1-a_2}< d < u(0)
              \end{cases}
        \end{equation}
    \item when $(a_2,b_1)$ falls in region (e), i.e., $\{(a_2,b_1)|0\leq a_2<\frac{u(0)-p_1}{u(0)-p_2}, \; b_1>\frac{u(0)-p_2}{u(0)-p_1}\}$, as $d$ increases from $p_2$ to $u(0)$, $(a_2,b_1)$ always stays in region (\Rmnum{4}), and thus we have
        \begin{equation}
              f(d)=d-p_1,\; \mathrm{when} \; p_2 < d < u(0).
        \end{equation}
 \end{enumerate}

\begin{figure}[!htb]
\center
\includegraphics[width=2.2in]{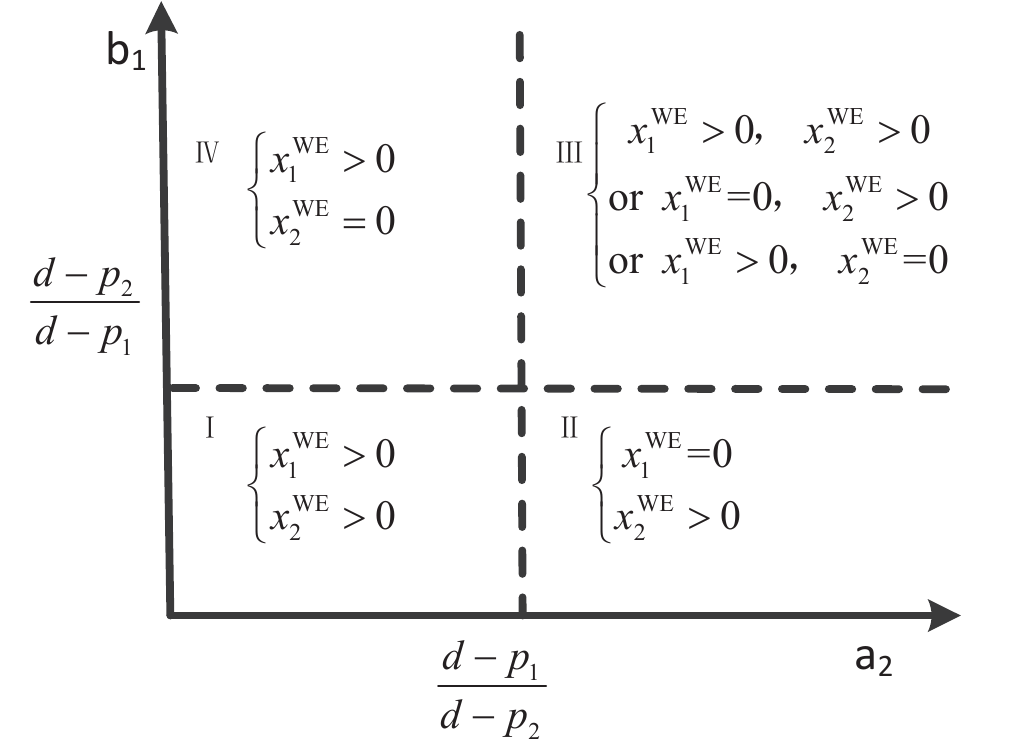}\\
\caption{Cross-AP interferences sub-regions given market disutility d}
\label{fig:division1}
\end{figure}

\begin{figure}[!htb]
\center
\includegraphics[width=2.8in]{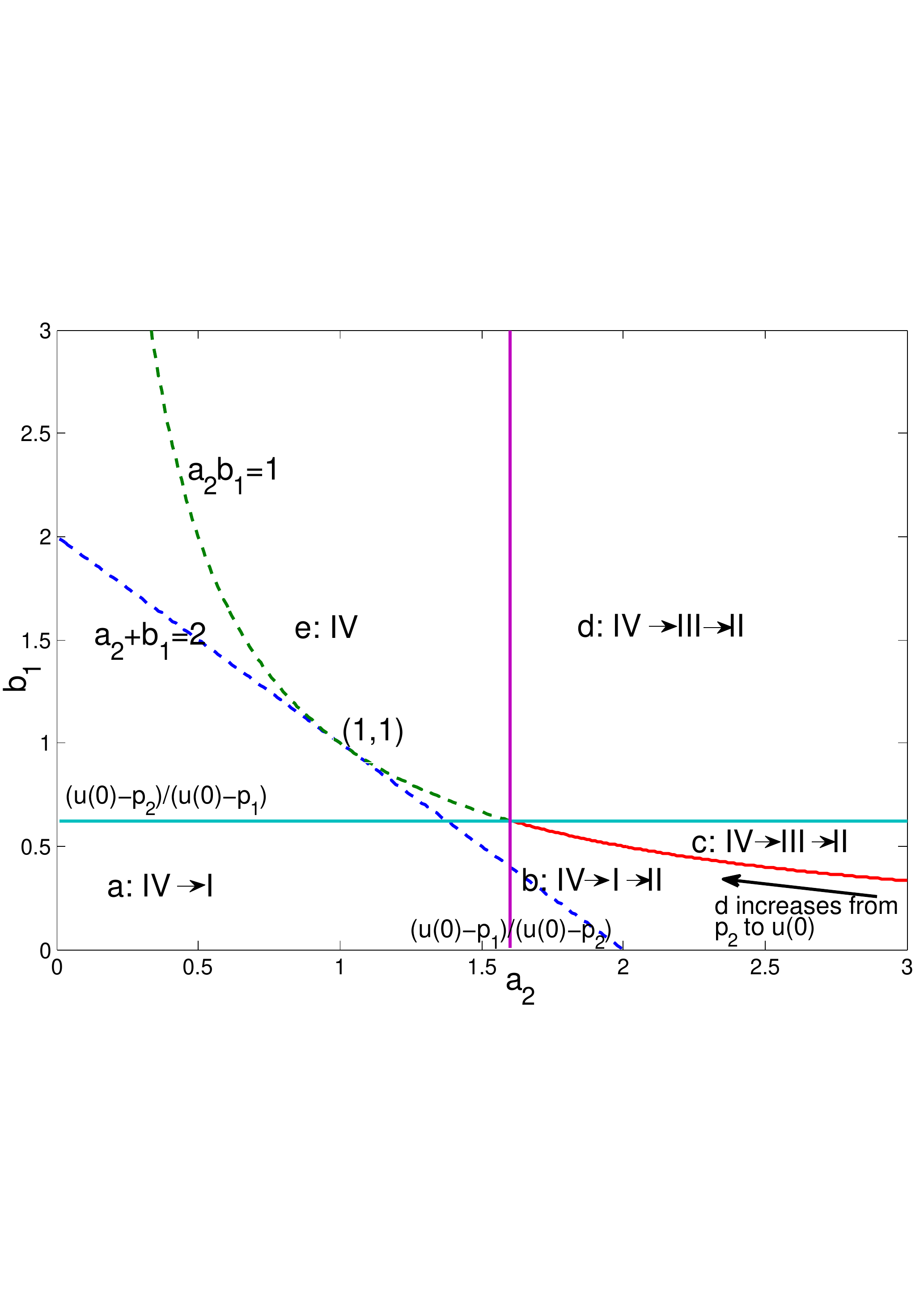}\\
\caption{General cross-AP interferences sub-regions}
\label{fig:division2}
\end{figure}

Take part of region (b) for an illustration, see Figure (\ref{fig:fd2}). When $(a_2,b_1)\in \{(a_2,b_1)|a_2b_1<1,a_2+b_1>2\}\bigcap \mathrm{region} \;(\mathrm{b})$, the total demand function $f(d)$ decreases in $d$ when $d$ falls in the interval $(\frac{p_2-b_1p_1}{1-b_1},\frac{p_1-a_2p_2}{1-a_2})$. It is a very counter-intuitive result, since the increase of end users' tolerance of the market disutility can lead to the decrease of the number of end users. There may exist multiple WEs due to the following two reasons:
\begin{enumerate}
\item   the interference regions involve region \Rmnum{3} when disutility $d$ varies (three cases can happen);
\item   the interferences regions involve region \Rmnum{1} and $(a_2,b_1)$ falls in the region $\{(a_2,b_1)|a_2b_1<1,a_2+b_1>2,a_2>0, b_1>0\}$ (this is due to non-monotonicity of the total flow).
\end{enumerate}
\noindent Therefore, when we change the prices $(p_1,p_2)$, as long as $a_2+b_1<2$, there exists a unique WE for all market demand functions. This result coincides with Theorem \ref{thm:uniqueness of WE}, while assumption LC2 is not required here.

\begin{figure}[!htb]
\center
\includegraphics[width=2.2in]{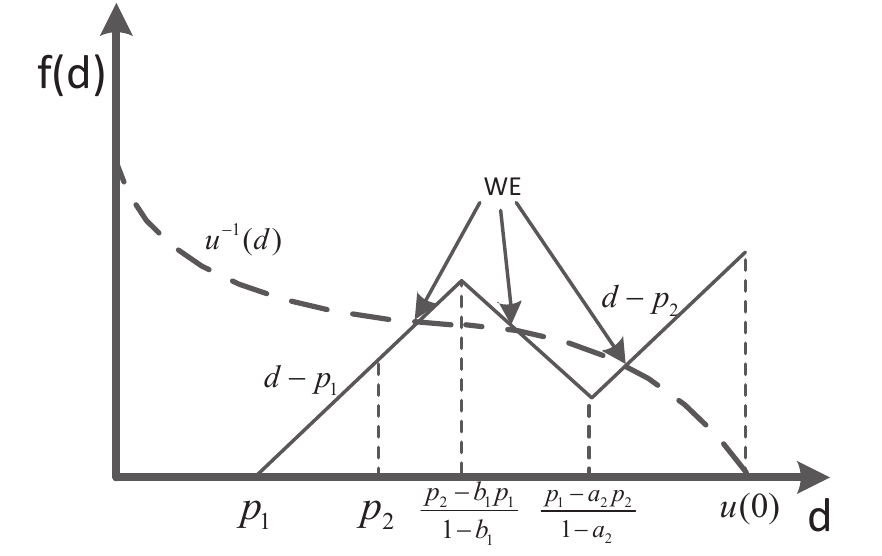}\\
\caption{An illustration when $(a_2,b_1)$ falls in region (b)}
\label{fig:fd2}
\end{figure}

\section{\textsc{Oligopoly Analysis}}
\label{sect:oligopoly analysis}

Generally, the oligopoly analysis of this problem is complex and intractable. In this section, in order to obtain analytical results and gain insights, we focus on a two-AP case under the linear assumptions (LC1, LC2) with weak cross-AP interferences. Through the following analysis, we characterize the APs' prices and user distributions at both monopoly and duopoly equilibria.

\subsection{Monopoly Case}
\label{subsect:monopoly case}

When the APs are controlled by one SP, the goal of the SP is to maximize the total profits of the APs. We prove that when there are only two APs and the interferences between them are weak, the pricing strategy can be expressed analytically.

Price differentiation (PD) is a pricing strategy in which an SP offers the same service or product at different prices in different markets. PD can be a feature of monopolistic and oligopolistic markets. By means of PD, SP can absorb consumer's surplus from the consumer to raise its revenue. If PD is permitted, the problem can be formulated as

\begin{eqnarray}
\max_{\mathbf{p}} & \mathbf{p}^T\mathbf{x} \label{equ:opt1} \\
\mathrm{s.t.} \; & \mathrm{LCP}\; (\ref{equ:constr1})-(\ref{equ:constr3}). \nonumber
\end{eqnarray}

\begin{thm}[Monopoly Equilibrium with PD]
\label{thm:monopoly equilibrium with PD}
Under the assumptions of LC1 and LC2, when cross-AP interferences are weak, i.e., $a_2+b_1<2$, and the prices are differentiated, ME exists and is unique. The corresponding prices are:
\begin{equation}
\mathbf{p}^{\mathrm{ME}} = (M^{-1}+(M^{-1})^T)^{-1}M^{-1}\mathbf{w}
\end{equation}
Meanwhile the user distributions between the APs are:
\begin{equation}
\mathbf{x}^{\mathrm{ME}} = M^{-1}(I-(M^{-1}+(M^{-1})^T)^{-1}M^{-1})\mathbf{w}
\end{equation}
\end{thm}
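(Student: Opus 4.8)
The plan is to collapse the LCP-constrained program (\ref{equ:opt1}) into a single unconstrained, strictly concave maximization, thereby bypassing the case analysis over which APs are active. The key observation is that the complementarity condition (\ref{equ:constr3}) removes the price from the objective. Writing $\mathbf{p}=\mathbf{w}+\mathbf{q}$, we have $\mathbf{p}^T\mathbf{x}=\mathbf{w}^T\mathbf{x}+\mathbf{q}^T\mathbf{x}$, and (\ref{equ:constr3}) gives $\mathbf{q}^T\mathbf{x}=\mathbf{x}^T\mathbf{q}=-\mathbf{x}^T M\mathbf{x}$, so that at \emph{every} WE the monopolist's profit equals
\[
g(\mathbf{x})\;=\;\mathbf{w}^T\mathbf{x}-\mathbf{x}^T M\mathbf{x},
\]
regardless of the active set. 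I would then show the SP can induce any target flow $\mathbf{x}\geq\mathbf{0}$: choosing $\mathbf{p}=\mathbf{w}-M\mathbf{x}$ forces $M\mathbf{x}+\mathbf{q}=\mathbf{0}$, so $\mathbf{x}$ satisfies (\ref{equ:constr1})--(\ref{equ:constr3}), and because $a_2+b_1<2$ makes $M$ a P-matrix (Lemma \ref{lemma:murty} together with the positive definiteness established in Theorem \ref{thm:uniqueness of WE}), this $\mathbf{x}$ is the \emph{unique} WE. Consequently (\ref{equ:opt1}) is equivalent to $\max_{\mathbf{x}\geq\mathbf{0}} g(\mathbf{x})$.

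Since $M$ is positive definite, so is $M+M^T$, hence the Hessian $-(M+M^T)$ of $g$ is negative definite and $g$ is strictly concave with a unique maximizer. The first-order condition $\nabla g=\mathbf{w}-(M+M^T)\mathbf{x}=\mathbf{0}$ then yields $\mathbf{x}^{\mathrm{ME}}=(M+M^T)^{-1}\mathbf{w}$ and $\mathbf{p}^{\mathrm{ME}}=\mathbf{w}-M\mathbf{x}^{\mathrm{ME}}=M^T(M+M^T)^{-1}\mathbf{w}$. To recover the stated expressions I would set $A\triangleq M^{-1}$ and use the identity $A+A^T=M^{-1}(M+M^T)(M^T)^{-1}$, whence $(A+A^T)^{-1}A=M^T(M+M^T)^{-1}$; substituting this reconciles the compact form with $\mathbf{p}^{\mathrm{ME}}=(M^{-1}+(M^{-1})^T)^{-1}M^{-1}\mathbf{w}$, and a parallel manipulation gives $\mathbf{x}^{\mathrm{ME}}=M^{-1}(I-(M^{-1}+(M^{-1})^T)^{-1}M^{-1})\mathbf{w}$. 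Uniqueness of the ME is immediate from strict concavity.

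The main obstacle I anticipate is ensuring that the unconstrained optimizer is actually feasible, so that the reduction is legitimate and no boundary (single-AP-active or no-AP-active) configuration does better. This requires checking $\mathbf{x}^{\mathrm{ME}}>\mathbf{0}$ and that the inducing prices $\mathbf{p}^{\mathrm{ME}}$ are non-negative. For $N=2$ I would verify this in closed form: with $M=G^T+\mathbf{1}\mathbf{s}^T$, $\mathbf{w}=[w,w]^T$, and $\tilde{g}_{1,1}=\tilde{g}_{2,2}=1$, $\tilde{g}_{2,1}=a_2$, $\tilde{g}_{1,2}=b_1$, one finds $M+M^T$ has off-diagonal entry $a_2+b_1+2s$ and diagonal $2(1+s)$, so $\mathbf{x}^{\mathrm{ME}}=\frac{w(2-a_2-b_1)}{\det(M+M^T)}[1,1]^T$, which is strictly positive precisely under the weak-interference condition $a_2+b_1<2$ (the same condition guaranteeing $\det(M+M^T)>0$). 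Since the global maximizer of the strictly concave $g$ over $\mathcal{R}^2$ thus lies in the positive orthant, the constraint $\mathbf{x}\geq\mathbf{0}$ is inactive, and a short check that $\mathbf{p}^{\mathrm{ME}}\geq\mathbf{0}$ completes both existence and the dominance of the interior solution over all boundary regimes.
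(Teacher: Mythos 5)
Your proposal is correct, and it takes a genuinely different route from the paper's own proof. The paper argues locally about the active set: it first rules out boundary configurations by an exchange argument (if $x_1^{\mathrm{WE}}=0$, $x_2^{\mathrm{WE}}>0$ at the ME with disutility $d^*$, then activating both APs at the same $d^*$ keeps total flow fixed but cuts the congestion cost by $(2-a_2-b_1)x_1^{\mathrm{WE}}x_2^{\mathrm{WE}}>0$, hence raises profit, and prices can be ``tuned'' so that $(a_2,b_1)$ lies in region (\Rmnum{1})); only then does it replace the LCP by $M\mathbf{x}+\mathbf{q}=\mathbf{0}$ and solve the resulting unconstrained problem, whose solution it does not spell out. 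You instead collapse the bilevel problem globally: the complementarity identity $\mathbf{p}^T\mathbf{x}=\mathbf{w}^T\mathbf{x}-\mathbf{x}^TM\mathbf{x}$ holds at \emph{every} WE regardless of which APs are active, and the P-matrix property makes the price-to-flow map onto ($\mathbf{p}=\mathbf{w}-M\mathbf{x}$ induces any $\mathbf{x}\geq\mathbf{0}$ as the unique WE), so (\ref{equ:opt1}) is exactly the strictly concave program $\max_{\mathbf{x}\geq\mathbf{0}}\left(\mathbf{w}^T\mathbf{x}-\mathbf{x}^TM\mathbf{x}\right)$. Your algebra checks out: $(M^{-1}+(M^{-1})^T)^{-1}M^{-1}=M^T(M+M^T)^{-1}$ reconciles your FOC solution with the stated formulas, and the ``short check'' you defer does hold, since $\det(M+M^T)=(2-a_2-b_1)(2+a_2+b_1+4s)>0$ gives $\mathbf{x}^{\mathrm{ME}}=\frac{w}{2+a_2+b_1+4s}[1,1]^T>\mathbf{0}$ and $\mathbf{p}^{\mathrm{ME}}=\frac{w}{2+a_2+b_1+4s}[1+b_1+2s,\,1+a_2+2s]^T>\mathbf{0}$, so the interior solution is feasible and dominates all boundary regimes. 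One small point worth making explicit: strict concavity gives uniqueness of the optimal \emph{flow}; uniqueness of the optimal \emph{price} then follows because $\mathbf{x}^{\mathrm{ME}}>\mathbf{0}$ forces the interior WE conditions $\mathbf{p}=\mathbf{w}-M\mathbf{x}^{\mathrm{ME}}$. As for what each approach buys: the paper's exchange argument carries the economic intuition that a monopolist always operates both APs because splitting flow dilutes the interference cost, but its tuning-into-region-(\Rmnum{1}) step is informal; your reduction is tighter, delivers existence and uniqueness in one stroke, and extends verbatim to $N>2$ APs, at the cost of having to verify interiority of the unconstrained optimizer ex post.
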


\begin{proof}
Firstly, we prove by contradiction that both APs have positive demands, i.e., $x_1^{\mathrm{WE}}(\mathbf{p}^{\mathrm{ME}})>0$ and $x_2^{\mathrm{WE}}(\mathbf{p}^{\mathrm{ME}})>0$. WLOG, suppose that at the ME, we have $x_1^{\mathrm{WE}}=0$ and $x_2^{\mathrm{WE}}>0$, the corresponding disutility is $d^{*}$. From the WE constraints $d^{*}=w-sx_2^{\mathrm{WE}}$, we get $x_2^{\mathrm{WE}}=(w-d^{*})/s$. Thus the cost due to delay is $(x_2^{\mathrm{WE}})^2=((w-d^{*})/s)^2$. Under the same degree of the disutility $d^{*}$, if there exists $p_1$ and $p_2$ that can make $x_1^{\mathrm{WE}}>0 \; x_2^{\mathrm{WE}}>0$, then the congestion cost incurred is
\begin{eqnarray}
& &(x_1^{\mathrm{WE}}+a_2x_2^{\mathrm{WE}})x_1^{\mathrm{WE}}+(x_2^{\mathrm{WE}}+b_1x_1^{\mathrm{WE}})x_2^{\mathrm{WE}}\\
&=& (x_1^{\mathrm{WE}}+x_2^{\mathrm{WE}})^2-(2-a_2-b_1)x_1^{\mathrm{WE}}x_2^{\mathrm{WE}}\\
&=& ((w-d^{*})/s)^2-(2-a_2-b_1)x_1^{\mathrm{WE}}x_2^{\mathrm{WE}} \\
&<& ((w-d^{*})/s)^2,
\end{eqnarray}
which means that SP can get more profits if both of the APs are at work. From Figure (\ref{fig:division2}), when $d^{*}$ is fixed, as long as $a_2+b_1<2$, we can tune the prices $p_1\; p_2$ that make $(a_2,b_1)$ falls in region (\Rmnum{1}). Thus both APs have positive demands and the constraints of the problem become $M\mathbf{x}+\mathbf{q}=0$. It is a non-constrained convex optimization problem and can be easily solved.
\end{proof}

Otherwise, without PD, the problem is to solve
\begin{eqnarray}
\max_{p} & p\mathbf{1}^T\mathbf{x} \label{equ:opt2} \\
\mathrm{s.t.} \; & \mathrm{LCP}\; (\ref{equ:constr1})-(\ref{equ:constr3}), \nonumber
\end{eqnarray}
\noindent and we have the following result.

\begin{thm}[Monopoly Equilibrium without PD]
\label{thm:monopoly equilibrium without PD}
Under the assumptions of LC1 and LC2, when cross-AP interferences are weak, i.e., $a_2+b_1<2$, and the prices are not differentiated, ME exists and is unique. The corresponding prices are:
\begin{equation}
p^{\mathrm{ME}} = w/2.
\end{equation}
Meanwhile the user distributions between the APs are:
\begin{equation}
\mathbf{x}^{\mathrm{ME}} = M^{-1}\mathbf{w}/2.
\end{equation}
\end{thm}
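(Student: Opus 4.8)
The plan is to mirror the proof of Theorem \ref{thm:monopoly equilibrium with PD}, exploiting the fact that without price differentiation the leader controls only the scalar $p$, so that $\mathbf{p}=p\mathbf{1}$ and the program (\ref{equ:opt2}) is genuinely one-dimensional. First I would argue that at the optimal uniform price both APs carry positive flow; granting this, the complementarity conditions (\ref{equ:constr1})--(\ref{equ:constr3}) collapse to the equality $M\mathbf{x}+\mathbf{q}=\mathbf{0}$ with $\mathbf{q}=p\mathbf{1}-\mathbf{w}=(p-w)\mathbf{1}$. Because $a_2+b_1<2$ renders $M$ positive definite and hence invertible (Theorem \ref{thm:uniqueness of WE}), I may invert to obtain $\mathbf{x}(p)=(w-p)M^{-1}\mathbf{1}$, the unique WE associated with the price $p\mathbf{1}$.

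Substituting into the revenue and writing $c\triangleq\mathbf{1}^T M^{-1}\mathbf{1}$ turns the objective into the scalar expression
\[
p\,\mathbf{1}^T\mathbf{x}(p)=c\,p(w-p).
\]
Since $M$ is positive definite, so is the symmetric part of $M^{-1}$ (for $\mathbf{y}\neq\mathbf{0}$ one has $\mathbf{y}^T M^{-1}\mathbf{y}=\mathbf{x}^T M\mathbf{x}>0$ with $\mathbf{x}=M^{-1}\mathbf{y}$), whence $c>0$ and $c\,p(w-p)$ is strictly concave in $p$. Its derivative $c(w-2p)$ vanishes only at $p^{\mathrm{ME}}=w/2$, giving the claimed price, and back-substitution yields $\mathbf{x}^{\mathrm{ME}}=(w-\tfrac{w}{2})M^{-1}\mathbf{1}=M^{-1}(\tfrac{w}{2}\mathbf{1})=M^{-1}\mathbf{w}/2$. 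Strict concavity furnishes uniqueness of the price, and uniqueness of the WE at that price (again Theorem \ref{thm:uniqueness of WE}) pins down the distribution.

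The main obstacle is the first step — certifying that both APs are active. In Theorem \ref{thm:monopoly equilibrium with PD} one activates the idle AP by independently retuning $p_1$ and $p_2$ so that $(a_2,b_1)$ lands in region (\Rmnum{1}) of Figure \ref{fig:division2}; with a single common price this freedom disappears, so that contradiction argument cannot be reused verbatim. Instead I would read positivity off the closed form: since $M^{-1}\mathbf{1}=\tfrac{1}{\det M}[\,1-a_2,\;1-b_1\,]^T$ with $\det M>0$ under weak interference, both components of $M^{-1}\mathbf{w}/2$ are nonnegative exactly when $a_2\le 1$ and $b_1\le 1$, which is precisely the regime in which the stated distribution is a genuine WE. To see that the price formula nevertheless survives the remaining weak-interference parameters, I would note that for a fixed interference matrix and a uniform price the activity test $(1-b_1)(p-w)\ge 0$ is independent of $p\in(0,w)$, so the active set — and hence the affine law $x(p)=\beta(w-p)$ for the total flow — is constant on $(0,w)$; the revenue $\beta\,p(w-p)$ is then maximized at $p=w/2$ whether one or both APs operate, with only the distribution formula $M^{-1}\mathbf{w}/2$ requiring $a_2,b_1\le 1$.
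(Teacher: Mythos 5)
Your proposal is correct and follows essentially the same route as the paper's proof: split according to whether $a_2,b_1\le 1$ or one of them exceeds $1$, note that the active set of APs is then fixed (independent of $p$), and in each case maximize the resulting concave parabola $\beta\,p(w-p)$ at $p=w/2$. Your added justifications --- that $c=\mathbf{1}^T M^{-1}\mathbf{1}>0$ via positive definiteness, and that the distribution formula $M^{-1}\mathbf{w}/2$ is only a genuine (nonnegative) WE when $a_2\le 1$ and $b_1\le 1$ --- are refinements of points the paper leaves implicit, not a different argument.
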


\begin{proof}
When $p_1=p_2$ is considered, it is a special case of Figure(\ref{fig:division1}). The region $\{(a_2,b_1)|0\leq a_2,b_1<1\}$ corresponds both of the APs have positive flow, the region $\{(a_2,b_1)|a_2\geq 1,b_1\leq 1,a_2+b_1<2\}$ corresponds to $x_2^{\mathrm{WE}}>0$ $x_1^{\mathrm{WE}}=0$ and the region $\{(a_2,b_1)|a_2\leq 1,b_1\geq 1,a_2+b_1<2\}$ corresponds to $x_1^{\mathrm{WE}}>0$ $x_2^{\mathrm{WE}}=0$. We prove by classified discussions.

\noindent(1) When $a_2\leq 1,b_1\leq 1$, since both APs have positive flows, the constraints of WE reduce to $M\mathbf{x}+\mathbf{q}=0$. Thus the profit of SP is $p\mathbf{1}^T(M^{-1}(\mathbf{w}-p\mathbf{1}))$ and SP can get its optimal profit by setting the price as $p^{\mathrm{WE}}=w/2$.

\noindent(2) When $a_2\geq 1,b_1\leq 1,a_2+b_1<2$, we have $x_2^{\mathrm{WE}}>0$ $x_1^{\mathrm{WE}}=0$. Thus the profit of SP is $px_2^{\mathrm{WE}}$. Since $w-sx_2^{\mathrm{WE}}=p+x_2^{\mathrm{WE}}$, the profit can be expressed as $p(w-p)/(s+1)$ and SP can also get its optimal profit by setting the price as $p^{\mathrm{WE}}=w/2$. The discussion of the other case is similar.
Thus we complete the proof.
\end{proof}
This result is interesting because when the prices are not differentiated, the SP simply sets $p^{\mathrm{ME}}=w/2$ regardless of the interferences between the APs.

\subsection{Duopoly Case}
\label{subsect:duopoly case}
When the APs are controlled by different SPs, the goal of each SP is to maximize his own profits. Similar to the monopoly case, we prove that each SP's profit-maximizing problem can be solved analytically when there only exist two APs.

For each SP, its optimization problem given the price of the other SP $p_{-i}$ can be expressed as

\begin{eqnarray}
\label{equ:DE problem}
\max_{p_i} & p_ix_i \label{equ:opt3} \\
\mathrm{s.t.} \; & \mathrm{LCP}\; (\ref{equ:constr1})-(\ref{equ:constr3}). \nonumber
\end{eqnarray}

The following theorem characterizes SPs' pricing strategies.

\begin{thm}[Duopoly Equilibrium]
\label{thm:duopoly equilibrium}
Under the assumptions of LC1 and LC2, when the cross-AP interferences are weak, i.e., $0\leq a_2+b_1<2$, for each $\mathrm{SP}_i$, the best pricing strategy is dependent on the cross-AP interference stated as follows.
\begin{enumerate}
\item When the cross-AP interferences satisfy $0\leq a_2 \leq 1$ and $0\leq a_2 \leq 1$, DE exists and is unique, and the pricing strategies are:
        \begin{equation}
        \begin{cases}
        p_1^{*}  = \frac{w[(a_2+s)(1-b_1)+2(1-a_2)(1+s)]}{4(1+s)^2-(s+a_2)(s+b_1)}, \\
        p_2^{*}  = \frac{w[(b_1+s)(1-a_2)+2(1-b_1)(1+s)]}{4(1+s)^2-(s+a_2)(s+b_1)};
        \end{cases}
        \end{equation}
        meanwhile the user distributions between the APs are:
        \begin{equation}
        \begin{cases}
        x_1^{*} = \frac{(w-p_1^{*})(1+s)-(w-p_2^{*})(a_2+s)}{(1+s)^2-(s+a_2)(s+b_1)}, \\
        x_2^{*} = \frac{(w-p_2^{*})(1+s)-(w-p_1^{*})(b_1+s)}{(1+s)^2-(s+a_2)(s+b_1)}.
        \end{cases}
        \end{equation}
\item When the cross-AP interferences satisfy $a_2>1$, $0\leq b_1 <1$ and $a_2+b_2<2$, the pricing strategies are:
        \begin{equation}
        \begin{cases}
        \mathrm{if}\; p_2^{*}x_2^{*}>\frac{(a_2-1)w^2}{(s+a_2)^2}, \; p_2^{\mathrm{DE}}=p_2^{*}, \; p_1^{\mathrm{DE}}=p_1^{*}, \; \\
        \mathrm{if}\; p_2^{*}x_2^{*}\leq \frac{(a_2-1)w^2}{(s+a_2)^2}, \; p_2^{\mathrm{DE}}=\frac{(a_2-1)w}{s+a_2}, \; p_1^{\mathrm{DE}}\geq 0. \;
        \end{cases}\nonumber
        \end{equation}
\item When the cross-AP interferences satisfy $b_1>1$, $0\leq a_2 <1$ and $a_2+b_2<2$, the pricing strategies are:
        \begin{equation}
        \begin{cases}
        \mathrm{if}\; p_1^{*}x_1^{*}>\frac{(b_1-1)w^2}{(s+b_1)^2}, \; p_1^{\mathrm{DE}}=p_1^{*}, \; p_2^{\mathrm{DE}}=p_2^{*}, \; \\
        \mathrm{if}\; p_1^{*}x_1^{*}\leq \frac{(b_1-1)w^2}{(s+b_1)^2}, \; p_1^{\mathrm{DE}}=\frac{(b_1-1)w}{s+b_1}, \; p_2^{\mathrm{DE}}\geq 0. \;
        \end{cases}\nonumber
        \end{equation}
\end{enumerate}
\end{thm}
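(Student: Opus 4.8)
The plan is to turn each provider's problem into a one-dimensional concave maximization, read off affine best-response functions, and then intersect them, treating the interior regime (both APs active) and the corner regime (one AP driven out) separately. First I would exploit the linear structure: whenever both APs carry positive flow the WE is interior and solves $M\mathbf{x}+\mathbf{q}=\mathbf{0}$, so $\mathbf{x}=M^{-1}(\mathbf{w}-\mathbf{p})$, and writing out the two components gives the displayed $x_1^{*},x_2^{*}$ as affine functions of $(p_1,p_2)$. Weak interference yields $\det M=1-a_2b_1+(2-a_2-b_1)s>0$, so the profit $p_ix_i$ is concave in $p_i$ (the coefficient of $p_i^2$ is $-(1+s)/\det M<0$); hence on the interior regime each $\mathrm{SP}_i$ has a single first-order condition and its best response is affine in the competitor's price.

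Solving the two best-response equations simultaneously produces the candidate $(p_1^{*},p_2^{*})$ of part (1). Uniqueness of this intersection follows because the product of the best-response slopes equals $(s+a_2)(s+b_1)/[4(1+s)^2]$, which is $<1$ since $a_2,b_1<2$ force $(s+a_2)(s+b_1)<(s+2)^2<4(1+s)^2$; the same inequality makes the denominator $4(1+s)^2-(s+a_2)(s+b_1)$ positive. In Case 1 ($a_2,b_1\le 1$) I would then check admissibility: the numerators of $p_1^{*},p_2^{*}$ are sums of nonnegative terms and the induced $x_1^{*},x_2^{*}$ are positive, so the WE really is interior, and by concavity neither provider gains from a corner move. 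This establishes existence and uniqueness of the DE in Case 1.

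The delicate part is Cases 2 and 3, where one coefficient exceeds one and a provider's demand becomes piecewise. For $a_2>1$, as $\mathrm{SP}_2$ lowers $p_2$ it congests the strongly-interfered $\mathrm{AP}_1$ and drives $x_1$ to zero at the breakpoint $\bar p_2(p_1)=w-(1+s)(w-p_1)/(a_2+s)$; below it $\mathrm{SP}_2$ is a monopolist with $x_2=(w-p_2)/(1+s)$ and profit $p_2(w-p_2)/(1+s)$. I would analyze $\mathrm{SP}_2$'s profit as two concave pieces and compare their maxima. The key computations are that at $p_1=0$ the breakpoint is exactly $\bar p_2(0)=\frac{(a_2-1)w}{s+a_2}$, that $\frac{(a_2-1)w}{s+a_2}<w/2$ because $a_2<2$ and $s>0$ give $a_2<s+2$ (so the monopoly profit is still increasing at the breakpoint and is maximized there), and that this maximal monopoly profit equals $\frac{(a_2-1)w^2}{(s+a_2)^2}$, precisely the threshold in the statement. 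Comparing it with the interior profit $p_2^{*}x_2^{*}$ gives the dichotomy: when the interior profit is larger the interior equilibrium survives, otherwise $\mathrm{SP}_2$ prefers to corner the market at $p_2^{\mathrm{DE}}=\frac{(a_2-1)w}{s+a_2}$ while $\mathrm{SP}_1$, whose interior demand reduces to $-(1+s)p_1/\det M\le 0$ and who therefore earns zero whatever it charges, is left with $p_1^{\mathrm{DE}}\ge 0$ arbitrary. Case 3 is the mirror image with the two providers interchanged.

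I expect the main obstacle to be making this non-smooth best-response correspondence rigorous rather than merely locating candidate prices: one must verify genuine \emph{mutual} best response across the regime boundary. In particular, at the interior candidate one needs that $\mathrm{SP}_2$'s most profitable monopoly deviation---whose value depends on the competitor's price through $\bar p_2(\cdot)$ and is therefore evaluated at $p_1^{*}$ rather than at $0$---does not beat the interior profit, and conversely that in the corner case $\mathrm{SP}_1$ truly cannot re-enter profitably for any admissible price. Reconciling the threshold, which is read off at the $p_1=0$ breakpoint, with these price-dependent deviation incentives is where the argument will require the most care.
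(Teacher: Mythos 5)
Your proposal is correct and follows essentially the same route as the paper's own proof: for case (1) the paper likewise derives the affine interior best responses $\mathrm{BR}_1(p_2)=\frac{(a_2+s)p_2+w(1-a_2)}{2(1+s)}$ and $\mathrm{BR}_2(p_1)=\frac{(b_1+s)p_1+w(1-b_1)}{2(1+s)}$ (clipped to $[0,u(0)]$), rules out boundary equilibria because both flows must be positive when $a_2,b_1\le 1$, and intersects them; for cases (2)--(3) it makes exactly your comparison of the interior profit $p_2^{*}x_2^{*}$ against the guaranteed-expulsion profit $(a_2-1)w^2/(s+a_2)^2$ earned at the constrained price $(a_2-1)w/(s+a_2)$. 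The one caveat: the ``main obstacle'' you flag at the end---verifying genuine mutual best responses across the regime boundary, with the expulsion breakpoint evaluated at the opponent's equilibrium price rather than at $p_1=0$---is not addressed by the paper at all; its proof stops at the bare profit comparison, so on this point your attempt is actually more careful than the published argument.
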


\begin{proof}
\\
\indent 1) When the cross-AP interferences satisfy $0\leq a_2 \leq 1$ and $0\leq a_2 \leq 1$, WLOG, we assume $x_1=0$ and $x_2>0$, and thus we have $d_1\geq d_2$ (WE condition), i.e., $p_1+a_2x_2\geq p_2+x_2$. $\mathrm{SP}_1$ can always lower the price $p_1$ it charges to gain positive profits. Thus both SPs can get positive profits and flow. Solving problem (\ref{equ:DE problem}), the best-response function of $\mathrm{SP}_1$ is :
    \begin{equation}
    \mathrm{BR}_1(p_2)=\max\left\{ \min\left\{\frac{(a_2+s)p_2+w(1-a_2)}{2(1+s)},\; u(0) \right\},\; 0 \right\} \nonumber
    \end{equation}
    and that of $\mathrm{SP}_2$ is:
    \begin{equation}
    \mathrm{BR}_2(p_1)=\max\left\{ \min\left\{\frac{(b_1+s)p_1+w(1-b_1)}{2(1+s)},\; u(0) \right\},\; 0 \right\} \nonumber
    \end{equation}
    Because of $x_1>0$ and $x_2>0$, DE cannot be obtained at the boundaries of both best-response functions, i.e., $\forall i \in \{1,2\}$, $p_i\neq 0,u(0)$. Thus DE is determined by the intersection point of the two best-response functions' linear part and we get the prices $\mathbf{p}^{\mathrm{DE}}$ and user distributions between SPs $\mathbf{x}^{\mathrm{WE}}(\mathbf{p}^{\mathrm{DE}})$ at DE. \\
\indent 2) When the cross-AP interferences satisfy $a_2>1$, $0\leq b_1 <1$ and $a_2+b_2<2$, $\mathrm{SP}_2$ has the power to expel $\mathrm{SP}_1$ from the market. When $\mathrm{SP}_1$ is expelled from market, we have $x_1=0$ and $x_2>0$, as the analysis above, the price of $\mathrm{SP}_2$ should satisfy $p_2\leq (a_2-1)x_2$ to guarantee that no matter how much $\mathrm{SP}_1$ charges, no users would choose it for service. It follows that $d_2=w-sx_2=p_2+x_2$, hence the profit of $\mathrm{SP}_2$ is $p_2(w-p_2)/(s+1)$. If there is no constraint of $p_2$, then $\mathrm{SP}_2$ can get its optimal profit by setting $p_2=w/2$. However, as we have the constraint $p_2\leq (a_2-1)x_2$ which implies that $p_2\leq (a_2-1)w/(s+a_2)<w/2$, $\mathrm{SP}_2$ have to set the price as $p_2=(a_2-1)w/(s+a_2)$ to get the maximum profit $(a_2-1)w^2/(s+a_2)^2$. When the two SPs coexists, the profit of $\mathrm{SP}_2$ is $p_2^{*}x_2^{*}$ and $\mathrm{SP}_2$ would choose the price that achieves the optimal profit.\\
\indent 3) When the cross-AP interferences satisfy $b_1>1$, $0\leq a_2 <1$ and $a_2+b_2<2$, the analysis is similar to 2).
\end{proof}

\section{\textsc{Comparison of Monopoly and Duopoly}}
\label{sect:comparison of monopoly and duopoly}
In this section, following the assumptions of Section \ref{sect:oligopoly analysis}, we further specialize to the case where cross-AP interferences are symmetric, i.e., $0\leq a_2=b_1<1$.

\subsection{Price of Competition on Social Welfare}
\label{subsect:price of competition on social welfare}

In order to compare the impacts of monopoly and duopoly on social welfare, we define the price of competition on social welfare (PoCS) as:
\begin{equation}
\mathrm{PoCS} = \frac{\mathrm{SW}(\mathbf{x}^{\mathrm{WE}}(\mathbf{p}^{\mathrm{ME}}))}{\mathrm{SW}(\mathbf{x}^{\mathrm{WE}}(\mathbf{p}^{\mathrm{DE}}))}
\end{equation}

We study the properties of PoCS and the results are summarized in the following theorem.

\begin{thm}[Bound on PoCS]
Under the assumptions of LC1 and LC2 with weak and symmetric cross-AP interferences, a lower bound of PoCS is $\frac{3}{4}$ and this bound is tight in that it is attained when the elasticity of market demand function, i.e., $\frac{1}{s}$ , approaches zero. On the other hand, PoCS is unbounded from above.
\end{thm}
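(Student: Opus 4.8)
The plan is to reduce everything to explicit rational functions of the two parameters $a$ (the common interference $a_2=b_1$) and $s$ (the demand slope), and then to prove a single polynomial inequality. First I would specialize the equilibrium data to the symmetric regime, where $M=\begin{bmatrix}1+s & a+s\\ a+s & 1+s\end{bmatrix}$ is symmetric, so $(M^{-1}+(M^{-1})^T)^{-1}=\tfrac12 M$ and Theorem \ref{thm:monopoly equilibrium with PD} (equivalently Theorem \ref{thm:monopoly equilibrium without PD}) collapses to $\mathbf{p}^{\mathrm{ME}}=\tfrac12\mathbf{w}$ and $\mathbf{x}^{\mathrm{ME}}=\tfrac12 M^{-1}\mathbf{w}$. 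Using $\det M=(1-a)(1+a+2s)$ one obtains $M^{-1}\mathbf{w}=\frac{w}{1+a+2s}\mathbf{1}$, so both APs carry equal flow and the total monopoly flow is $X_m=\frac{w}{1+a+2s}$. For the duopoly I would invoke the first case of Theorem \ref{thm:duopoly equilibrium} (valid since $0\le a_2=b_1<1$), whose symmetric best responses intersect at $p^{\mathrm{DE}}=\frac{w(1-a)}{2+s-a}$; with the identities $4(1+s)^2-(s+a)^2=(2+s-a)(2+3s+a)$ and $(1+s)^2-(s+a)^2=(1-a)(1+2s+a)$ this yields a symmetric flow with total $X_d=\frac{2w(1+s)}{(2+s-a)(1+2s+a)}$.

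The key structural observation is that because both equilibria are symmetric ($x_1=x_2$), the congestion term collapses to $\sum_j x_j l_j=(1+a)\cdot 2x^2=\tfrac{1+a}{2}X^2$, so at either equilibrium the welfare is one quadratic in the total flow, $\mathrm{SW}(X)=wX-\tfrac{s+1+a}{2}X^2$. Hence
\[
\mathrm{PoCS}=\frac{\mathrm{SW}(X_m)}{\mathrm{SW}(X_d)},\qquad X_d=\frac{2(1+s)}{2+s-a}\,X_m,
\]
and after cancelling the common factor $1+a+2s$ I would reduce this to the closed form
\[
\mathrm{PoCS}=\frac{(2+s-a)^2(1+a+3s)}{4(1+s)\,(s^2+3s+1-2as-a^2)}.
\]
The denominator factor $s^2+3s+1-2as-a^2=s^2+(3-2a)s+(1-a^2)$ is strictly positive on $0\le a<1$, $s>0$ (consistent with $\mathrm{SW}(X_d)>0$), so the ratio is well defined.

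For the lower bound I would subtract $\tfrac34$ and study the numerator polynomial $N(a,s)$ of $\mathrm{PoCS}-\tfrac34$. Expanding the two products and cancelling the $s^3$, $a^2$, and $a$ terms yields the compact expression
\[
N(a,s)=1+4s+s^2+a^3+as^2+4a^2s-4as.
\]
The only negative monomial is $-4as$, and the crux is to absorb it: regrouping $4s-4as=4s(1-a)$ displays
\[
N(a,s)=1+4s(1-a)+s^2+a^3+as^2+4a^2s,
\]
which is a sum of manifestly nonnegative terms for $0\le a\le 1$, $s\ge 0$, so in fact $N\ge 1>0$. Thus $\mathrm{PoCS}>\tfrac34$ throughout. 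I expect this regrouping to be the only genuinely delicate point; the expansion producing $N$ is routine but must be done carefully, and nonnegativity becomes transparent only after the right grouping.

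Finally I would settle tightness and unboundedness directly from the closed form. For tightness, fixing $a$ and letting $s\to\infty$ (i.e. elasticity $1/s\to0$), the leading terms give $\mathrm{PoCS}\sim \frac{3s^3}{4s^3}=\frac34$, so the infimum $\tfrac34$ is approached but, by $N\ge1$, never attained at finite $s$ — exactly the claimed tightness. For unboundedness, letting $s\to0^+$ first gives $\mathrm{PoCS}\to\frac{(2-a)^2}{4(1-a)}$, which diverges as $a\to1^-$; equivalently, fixing small $s>0$ and letting $a\to1^-$ gives $\mathrm{PoCS}\to\frac{2+3s}{4s}$, arbitrarily large for small $s$. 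Either limit shows $\mathrm{PoCS}$ is unbounded from above.
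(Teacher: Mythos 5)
Your proposal is correct and follows essentially the same route as the paper: it reduces PoCS to the identical closed-form rational function $\frac{(3s+1+a)(s+2-a)^2}{4(s+1)(s^2+3s+1-2as-a^2)}$, then checks the lower bound and the limits $s\to\infty$ and $s\to 0^+,\,a\to 1^-$. The only difference is that you supply the details the paper compresses into ``it is easy to verify'' — the explicit symmetric equilibria, the expansion $N(a,s)=1+4s(1-a)+s^2+a^3+as^2+4a^2s\ge 1$, and the limit computations — all of which check out.
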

\begin{proof}
Using the results of Theorem \ref{thm:monopoly equilibrium with PD} and Theorem \ref{thm:duopoly equilibrium}, we obtain the analytical expression of PoCS as
\begin{equation}
\mathrm{PoCS} = \frac{(3s+1+a_2)(s+2-a_2)^2}{4(s+1)(s^2+3s+1-2a_2s-a_2^2)}
\end{equation}
Then it is easy to verify that $\mathrm{PoCS}-\frac{3}{4} \geq 0$ and $\lim_{s \rightarrow \infty}\mathrm{PoCS}=\frac{3}{4}$. When the elasticity of market demand $\frac{1}{s}$ approaches to infinity and cross-AP interferences approach to $1$, we have $\lim_{s\rightarrow 0^{+},a_2\rightarrow 1^{-}}\mathrm{PoCS}(a_2,s)=\infty$, which means that PoCS is unbounded from above.
\end{proof}

\subsection{Price of Competition on Profits}
\label{subsect:price of competition on profits}

In order to compare the impacts of monopoly and duopoly on SPs' profits, we define the price of competition on profits (PoCP) as
\begin{equation}
\mathrm{PoCP} = \frac{\sum_{i}p_i^{\mathrm{ME}}x_i^{\mathrm{WE}}(\mathbf{p}^{\mathrm{ME}})}
{\sum_{i}p_i^{\mathrm{DE}}x_i^{\mathrm{WE}}(\mathbf{p}^{\mathrm{DE}})}.
\end{equation}
Using the results of Theorem \ref{thm:monopoly equilibrium with PD} and \ref{thm:duopoly equilibrium}, we obtain

\begin{equation}
\mathrm{PoCP} = \frac{(s-a_2+2)^2}{4(1-a_2)(s+1)}
\end{equation}
We can then verify that $\mathrm{PoCP}-1\geq 0$ and $\lim_{s,a_2 \rightarrow 0}\mathrm{PoCP}=1$, so we have the following theorem.

\begin{thm}[Lower bound of PoCP]
Under the assumptions of LC1 and LC2 with weak and symmetric cross-AP interferences, a lower bound of PoCP is $1$ and this bound is tight in that is attained when the elasticity of market demand function, i.e. $\frac{1}{s}$ , diverges and cross-AP interferences approach zero. On the other hand, PoCP is unbounded from above.
\end{thm}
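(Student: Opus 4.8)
The plan is to start from the closed-form expression for PoCP already obtained above from Theorems \ref{thm:monopoly equilibrium with PD} and \ref{thm:duopoly equilibrium}, namely $\mathrm{PoCP}=\frac{(s-a_2+2)^2}{4(1-a_2)(s+1)}$, which is valid throughout the admissible region $0\le a_2=b_1<1$, $s>0$ dictated by the symmetric weak-interference assumption. Three things then have to be established: that $\mathrm{PoCP}\ge 1$ everywhere on this region, that the value $1$ is the best (largest) such lower bound because it is approached in an appropriate limit (tightness), and that no finite upper bound exists. I would treat these in turn as an elementary analysis of a rational function of the two parameters $(s,a_2)$.

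For the lower bound, note that the denominator $4(1-a_2)(s+1)$ is strictly positive on the region (since $a_2<1$ and $s>0$), so $\mathrm{PoCP}\ge 1$ is equivalent to the polynomial inequality $(s-a_2+2)^2\ge 4(1-a_2)(s+1)$. First I would expand both sides and collect terms; the difference collapses to a perfect square,
\[
(s-a_2+2)^2-4(1-a_2)(s+1)=(s+a_2)^2 .
\]
Consequently $\mathrm{PoCP}-1=\frac{(s+a_2)^2}{4(1-a_2)(s+1)}\ge 0$, with equality only in the degenerate limit $s+a_2\to 0$. This single identity is the crux of the lower bound.

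Tightness and unboundedness then both follow by inspecting that same gap term. For tightness, as the elasticity $1/s$ diverges and the interference vanishes, i.e. $(s,a_2)\to(0^+,0^+)$, the numerator $(s+a_2)^2\to 0$ while the denominator tends to $4$, so $\mathrm{PoCP}\to 1$; combined with $\mathrm{PoCP}\ge 1$ this shows $1$ is the infimum and is attained in the limit. For unboundedness from above, I would fix any $s>0$ and let $a_2\to 1^-$, which stays inside the weak-interference constraint $2a_2<2$; the numerator tends to $(s+1)^2>0$ while $4(1-a_2)(s+1)\to 0^+$, so $\mathrm{PoCP}\to+\infty$.

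The individual inequalities and limits here are routine; the genuine labour, and the step most prone to error, sits upstream in producing the closed form itself, which requires substituting the equilibrium prices and flows of Theorems \ref{thm:monopoly equilibrium with PD} and \ref{thm:duopoly equilibrium} into the definition of PoCP and simplifying under the symmetry $a_2=b_1$. Once that expression is in hand the bound is essentially a one-line factorization. The one subtlety I would double-check is that the claimed limiting parameter points lie in the closure of the admissible set, so that the infimum $1$ and the divergence to $+\infty$ are genuinely approached from within the region where the formula is valid.
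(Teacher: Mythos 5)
Your proposal is correct and follows essentially the same route as the paper: both start from the closed form $\mathrm{PoCP}=\frac{(s-a_2+2)^2}{4(1-a_2)(s+1)}$ and analyze it as a rational function, verifying $\mathrm{PoCP}-1\geq 0$, the limit $1$ as $(s,a_2)\to(0^+,0^+)$, and divergence as $a_2\to 1^-$. In fact you supply details the paper leaves implicit --- the perfect-square identity $(s-a_2+2)^2-4(1-a_2)(s+1)=(s+a_2)^2$ and the explicit unboundedness argument --- so your write-up is, if anything, more complete than the original.
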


\section{Numerical Illustrations}
\label{sect:numerical}
In this section, we perform some numerical experiments to illustrate the analysis in Section \ref{sect:oligopoly analysis} and Section \ref{sect:comparison of monopoly and duopoly}.

Firstly, we study the relationship between the equilibrium prices and cross-AP interferences. We set $w=s=1$, $b_1=0.3$ and vary $a_2$ from $0$ to $1.7$, then we get the variations of prices, user distributions and the profits of APs at ME and DE, respectively (see Figure (\ref{fig:equilibrium})). From the figure, we can see that the prices at ME are always higher than those of DE which accords with our intuitive ideas. As $a_2$ increases, SP increases $\mathrm{AP}_2$'s access price to control the flow amount, thus limiting the cross-AP interference from $\mathrm{AP}_2$ to $\mathrm{AP}_1$; while at DE, $\mathrm{SP}_2$ even continues to reduce $\mathrm{AP}_2$'s price, which aggravates $\mathrm{SP}_1$'s situation. Interestingly, at ME, the monopolistic SP prefers to equalize the user distributions between two APs.

\begin{figure}[!ht]
\centering
\subfigure[AP prices set by SPs at ME and DE]{
\includegraphics[width=2in]{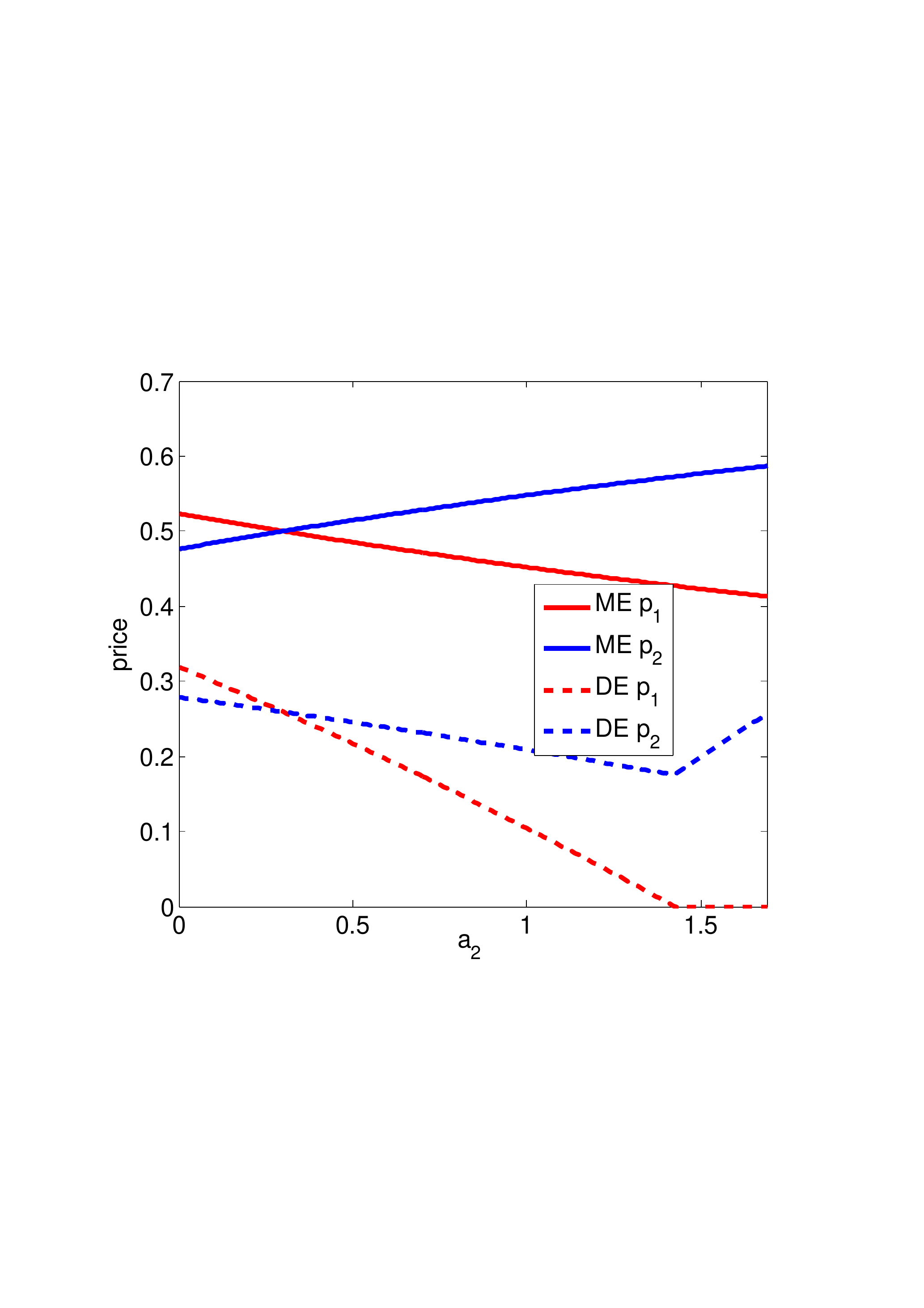}}
\subfigure[User distributions between APs at ME and DE]{
\includegraphics[width=2in]{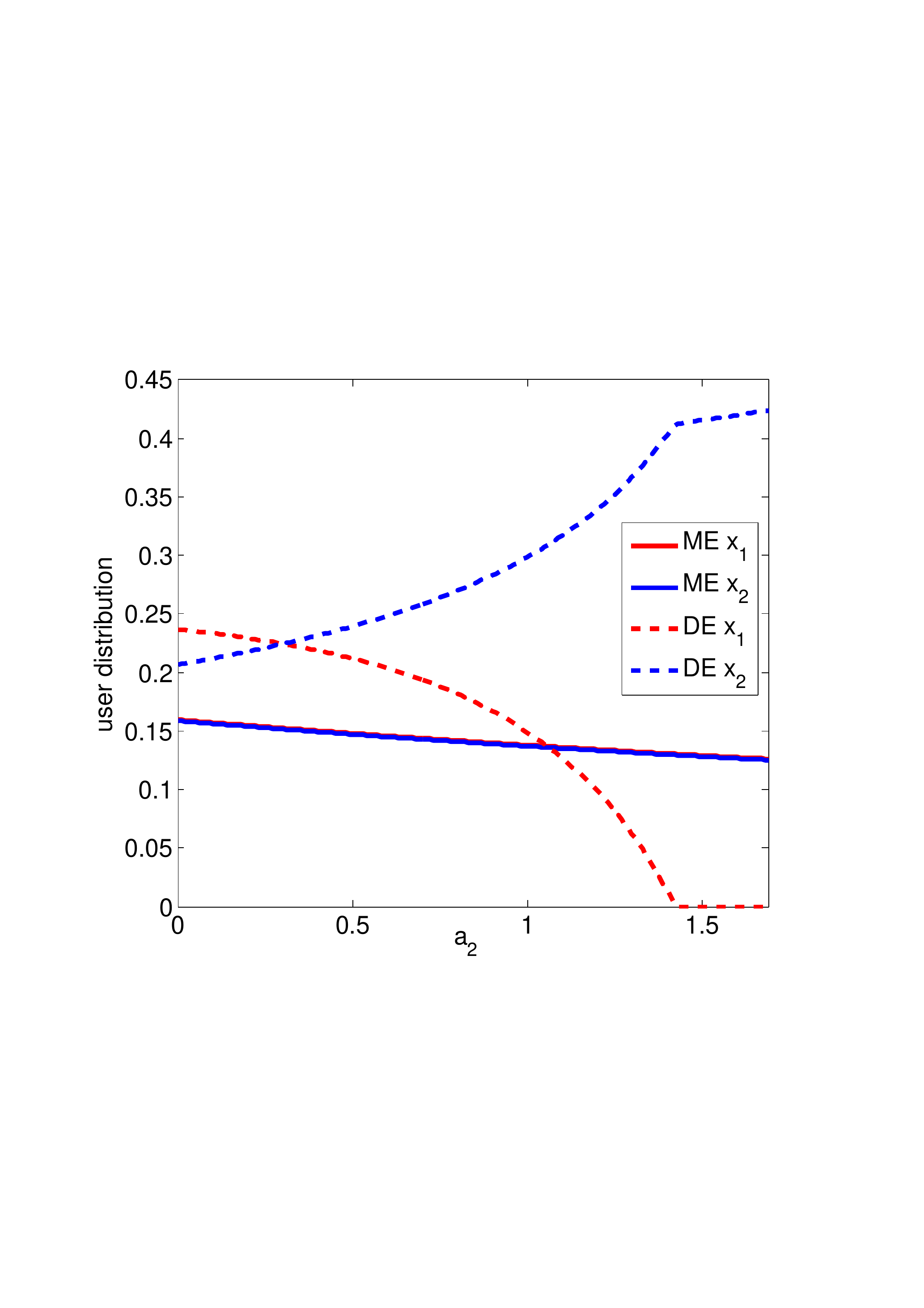}}
\subfigure[Profit of each AP at ME and DE]{
\includegraphics[width=2in]{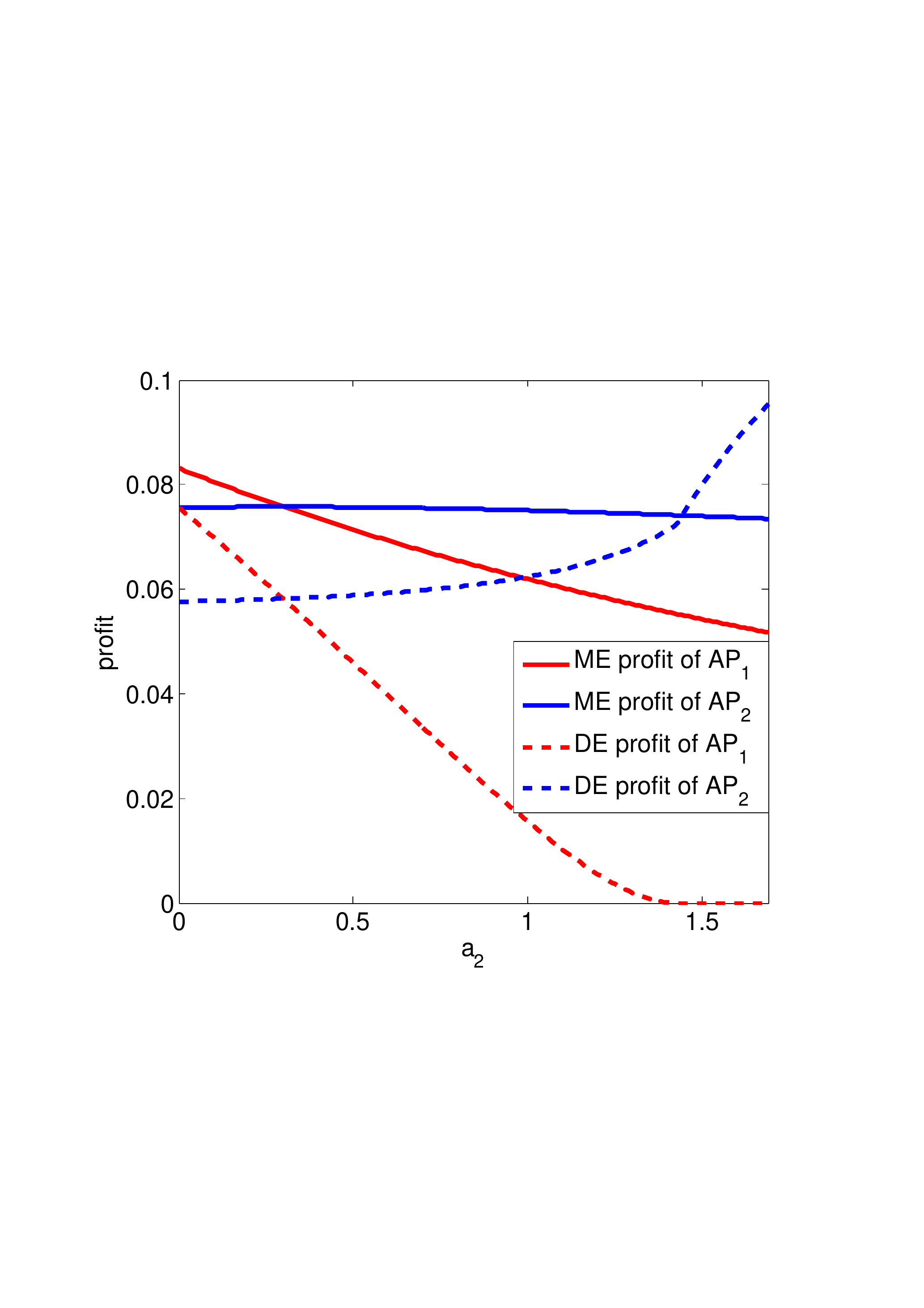}}
\caption{AP prices, user distributions and profits at equilibria}
\label{fig:equilibrium}
\end{figure}

Then we study PoCS and PoCP for both symmetric and asymmetric cases. We set $s$ equal to $0.1$, $1$ and $10$, respectively, vary $a_2$ from $0$ to $1$ and get the PoCS and PoCP variation tendencies (see Figure (\ref{fig:PoCS PoCP symmetric})). From Figure (\ref{fig:PoCS PoCP symmetric}a), we find that when the elasticity of market demand equals $1$ and the symmetric cross-AP interferences are less than $0.7$, duopoly is more favorable than monopoly from the point of view of social welfare. However, when the symmetric cross-interferences are strong, monopoly is more favorable than monopoly from the point of view of both social welfare and SPs' profits.

\begin{figure}[!ht]
\centering
\subfigure[PoCS under symmetric cross-AP interferences]{
\includegraphics[width=2in]{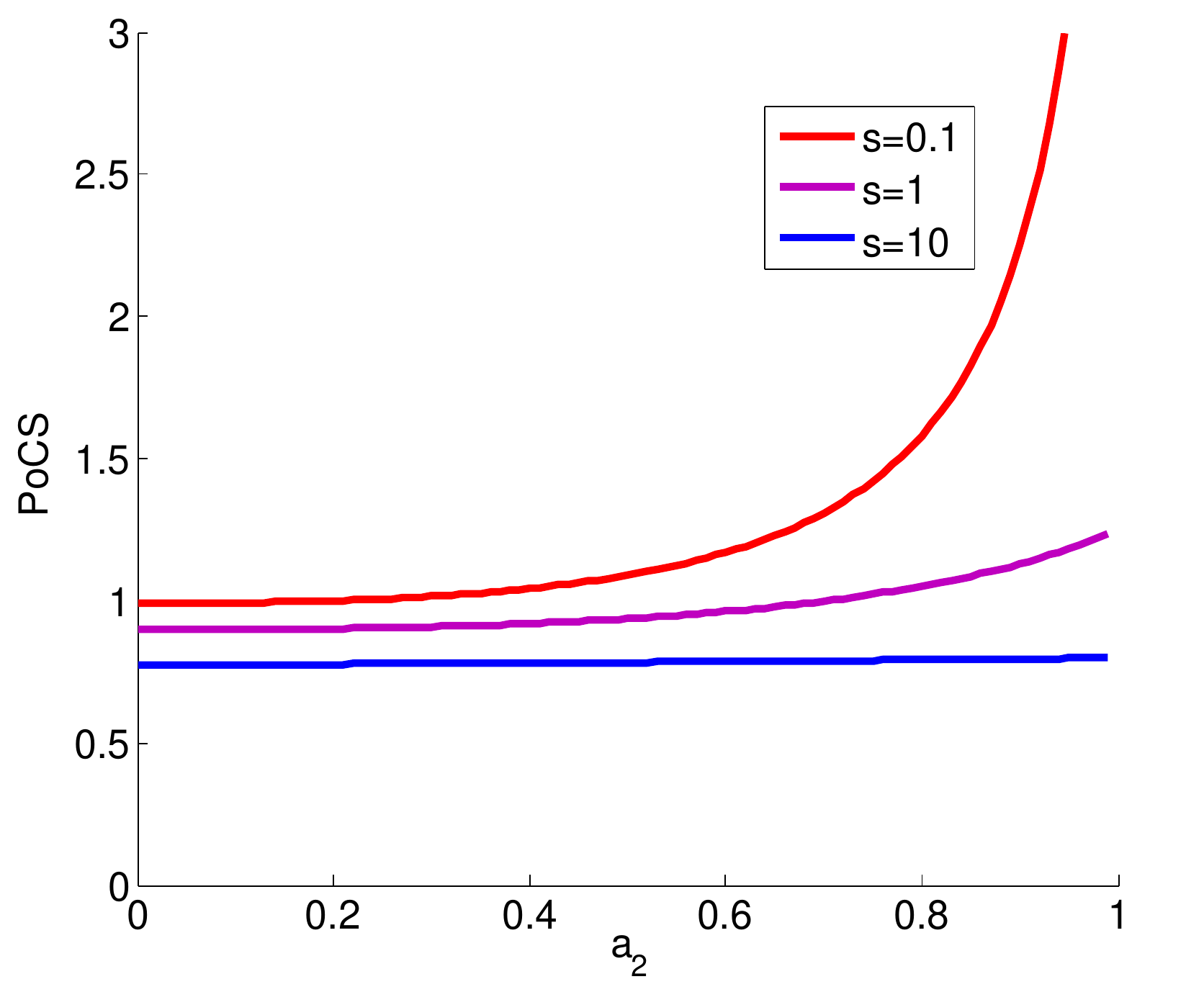}} \label{fig:PoCS symmetric}
\subfigure[PoCP under symmetric cross-AP interferences]{
\includegraphics[width=2in]{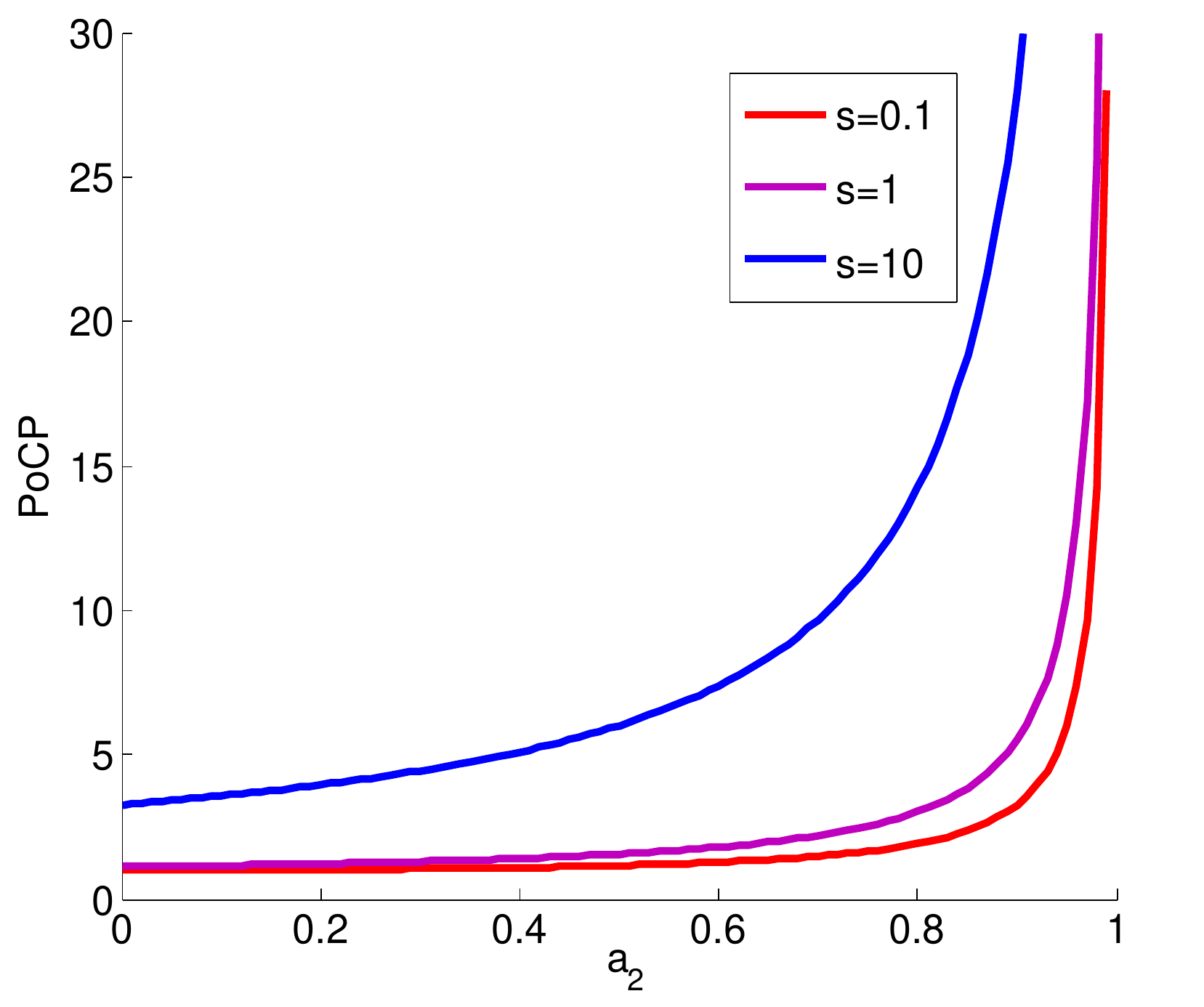}} \label{fig:PoCP symmetric}
\caption{PoCS and PoCP under symmetric cross-AP interferences}
\label{fig:PoCS PoCP symmetric}
\end{figure}

\section{\textsc{Conclusion}}
\label{sect:conclusion}
Through the analysis in this paper, we find that when there exist cross-AP interferences in the system, the performance of monopoly is guaranteed from both the perspectives of social welfare and profits. Indeed, as cross-AP interferences become strong, the penalty due to competition on both social welfare and SPs' profits become large and even unbounded. Such results shed new light on understanding the role of competition and efficiency in wireless SP markets, where cross-AP interferences are inevitable.


\section*{Acknowledgment}

The research was supported by National Basic Research Program of China (973 Program) through grant 2012CB316004, 100 Talents Program of Chinese Academy of Sciences and MIIT of China through grant 2011ZX03001-006-01.

\bibliographystyle{IEEEtran}
\bibliography{reference}

\end{document}